\begin{document}

\preprint{}

\title
{Classical representation of local Clifford operators}

\author{Cai-Hong Wang$^{1}$}%
\author{Jiang-Tao Yuan$^{1}$}%
\email[]{jtyuan@hpu.edu.cn}
\author{Zhi-Hao Ma$^{2}$}
\email[]{mazhihao@sjtu.edu.cn}
\author{Shao-Ming Fei$^{3}$}
\email[]{feishm@cnu.edu.cn}
\author{Shang-Quan Bu$^{4}$}
\email[]{bushangquan@tsinghua.edu.cn}

\affiliation{%
$^{1}$ Department of General Education, Wuxi University, Wuxi, 214105, China
$^{2}$School of Mathematical Sciences, MOE-LSC, Shanghai Jiao Tong University, Shanghai, 200240, China;\\
Shanghai Seres Information Technology Co., Ltd., Shanghai, 200040, China;
and Shenzhen Institute for Quantum Science and Engineering, Southern University of Science and Technology, Shenzhen, 518055, China\\
$^{3}$School of Mathematical Sciences, Capital Normal University, Beijing 100048, China\\
$^{4}$Department of Mathematical Sciences, University of Tsinghua, Beijing 100084, China
}%

\date{\today}

\begin{abstract}
It is known that every (single-qudit) Clifford operator maps the full set of generalized Pauli matrices (GPMs) to itself under unitary conjugation, which is an important quantum operation and plays a crucial role in quantum computation and information. However, in many quantum information processing tasks,  it is required that a specific set of GPMs be  mapped to another such set under conjugation, instead of the entire set. We formalize this by introducing local Clifford operator, which maps a given $n$-GPM set to another such set under unitary conjugation. We establish necessary and sufficient conditions for such an operator to transform a pair of GPMs, showing that these local Clifford operators admit a classical matrix representation, analogous to the classical (or symplectic) representation of standard (single-qudit) Clifford operators. Furthermore, we demonstrate that any local Clifford operator acting on an $n$-GPM ($n\geq 2$) set can be decomposed into a product of standard Clifford operators and a local Clifford operator acting on a pair of GPMs. This decomposition provides a complete classical characterization of unitary conjugation mappings between $n$-GPM sets. As a key application, we use this framework to address the local unitary equivalence (LU-equivalence) of sets of generalized Bell states (GBSs). We prove that the 31 equivalence classes of $4$-GBS sets in bipartite system $\mathbb{C}^{6}\otimes \mathbb{C}^{6}$ previously identified via Clifford operators are indeed distinct under LU-equivalence, confirming that this classification is complete.
\end{abstract}

\maketitle


\section{Introduction}
Quantum computation has emerged as a highly promising and dynamic field, marked by substantial theoretical and experimental advances in recent years~\cite{niel2010camb-book,kaye2007camb-book,goog2023nature0}. Clifford operators play a central role in many quantum information processing protocols, forming a fundamental class of quantum operations that underpin fault-tolerant quantum computation~\cite{gotts1997phd,gotts1998pra,goog2023nature}. According to the Gottesman-Knill theorem, the evolution of a quantum circuit composed solely of Clifford gates, when initialized in a computational basis state, can be efficiently simulated on a classical computer~\cite{aaro2004pra,berg2021iee}. Although Clifford gates alone are not universal for quantum computation, their universality can be achieved by incorporating a single non-Clifford gate, such as the $T$-gate. A prominent approach to realizing such universality is magic state distillation~\cite{seli2015lmcs,bravyi2005magic}. Moreover, Clifford operators are pivotal in various many quantum information protocols, including quantum error correction \cite{gras2003ifc} and entanglement distillation \cite{deha2003pra,zheng2025quantum}. Their unique mathematical properties establish Clifford operators as fundamental building blocks across quantum information science.

In this work, we focus on single-qudit Clifford operators\textemdash unitary operators that map the Pauli group to itself under conjugation. The classical representation of these operators have been extensively studied \cite{appl2005jmp,fari2014jpa,hos2005pra}. In a $d$-dimensional Hilbert space, it is known that, up to a global phase, the classical representation of a (single-qudit) Clifford operator corresponds to a two by two symplectic matrix over $\mathbb{Z}_d$.

In many quantum information processing tasks, it suffices to consider a subset of generalized Pauli matrices (GPMs), also known as Weyl-Heisenberg operators, rather than the full set of GPMs.  Notably, two sets of generalized Bell states (GBSs) that are local unitary equivalent (LU-equivalent) exhibit identical effectiveness and utility in quantum protocols \cite{wu-tian2018pra}. Specially, their distinguishability under local operations and classical communication (LOCC) is exactly the same. Due to the one-to-one correspondence between GBSs and GPMs, the LU-equivalence of GBS sets is generally determined by the unitary equivalence (U-equivalence) of the corresponding GPM sets. In \cite{wang2021jmp,wang2025epj}, the authors characterized the U-equivalence of GPM sets, showing that determining whether two GPM sets are U-equivalent reduces to determining whether they are unitary conjugate equivalent (UC-equivalent). Recall that two GPM sets are UC-equivalent if there exists a unitary operator that maps one set to the other under conjugation (up to a global phase). For convenience, we refer to such a unitary operator as a {\it local Clifford operator}. Thus, the problem of finding all GPM sets UC-equivalent to a given GPM set reduces to identifying all local Clifford operators acting on that set. Wu et al. \cite{wu-tian2018pra} constructed a class of unitary operators to prove LU-equivalence between two GBS sets. Although these operators are not standard Clifford operators, they satisfy the definition of local Clifford operators proposed in this work. Consequently, as an extension of Clifford operators, the study of local Clifford operators is both fundamentally important and practically necessary.

Inspired by the classical (or symplectic) representation of Clifford operators, in this work, we investigate the classical (or matrix) representation of local Clifford operators. We show that the classical representation of a local Clifford operator comprises a sequence of Clifford operators along with a local Clifford operator acting on a pair of GPMs. This representation, combined with the method in \cite{wang2021jmp,wang2025epj}, can identify all GPM sets U-equivalent to a given $n$-GPM set and fully characterize the U-equivalence classes of all $n$-GPM sets. Consequently, the local distinguishability problem for GBS sets reduces to determining the local distinguishability of representative elements from each equivalence class. This demonstrates the utility of local Clifford operators in addressing the local discrimination problem of GBS sets.

This paper is organized as follows. In Section II, We review fundamental concepts, including Clifford operators and their classical representation, while also introducing the notion of local Clifford operators. In Section III, for a $d$-dimensional quantum system, we present the classical representation of local Clifford operators acting on a binary GPM sets $\{X^a, Z^b\}$, where $a$ and $b$ are positive factors of the dimension $d$. In Section IV, we obtain the classical representation of local Clifford operators acting on a $n$-GPM set. It is shown that the classical representation of a local Clifford operator is composed of some Clifford operators, along with a local Clifford operator acting on a $2$-GPM set. In Section V, using classical representations, we provide procedures to determine the LU-equivalence class of a given GBS set and to verify whether two given GBS sets are LU-equivalent. Finally, we summarize our conclusions and propose future research.

\section{Preliminaries}

\newtheorem{definition}{Definition}
\newtheorem{lemma}{Lemma}
\newtheorem{theorem}{Theorem}
\newtheorem{corollary}{Corollary}
\newtheorem{example}{Example}
\newtheorem{proposition}{Proposition}
\newtheorem{problem}{Problem}
\newtheorem{conjecture}{Conjecture}
\newtheorem{remark}{Remark}

\def\QEDclosed{\mbox{\rule[0pt]{1.3ex}{1.3ex}}}
\def\QED{\QEDclosed}
\def\proof{\indent{\em Proof}.}
\def\endproof{\hspace*{\fill}~\QED\par\endtrivlist\unskip}

\newtheorem{procedure}{Procedure}

Consider a $d$-dimensional Hilbert space with the computational basis $\{|j\rangle\}_{j=0}^{d-1}$. Let $\mathbb{Z}_{d}=\{0,1,\ldots,d-1\}$ denote the integers modulo $d$. Let $X^{m}Z^{n}, m, n\in\mathbb{Z}_{d}$ be {\it generalized Pauli matrices} (GPMs) where $X|j\rangle=|j+1$ mod $d\rangle$, $Z|j\rangle=\omega^{j}|j\rangle$ and $\omega=e^{2\pi i/d}$, which are generalizations of Pauli matrices. For convenience, we denote a GPM $X^mZ^n$ by $(m, n)$.

In a bipartite quantum system $\mathbb{C}^{d}\otimes\mathbb{C}^{d}$, the canonical maximally entangled state $|\Phi\rangle$ is $|\Phi_{00}\rangle=(1/\sqrt{d})\sum_{j=0}^{d-1}|jj\rangle$. The states
\begin{eqnarray*}
|\Phi_{m,n}\rangle=(I\otimes X^{m}Z^{n})|\Phi\rangle
\end{eqnarray*}
are called {\it generalized Bell states} (GBSs). Note that there is a one-to-one correspondence between GBSs and GPMs as above.

A {\it Clifford operator} is defined as a unitary operator that maps all GPMs to GPMs (up to a global phase) under conjugation. Following from \cite{hos2005pra}, each single-qudit Clifford operator is, up to a global phase, uniquely represented by a two by two  symplectic matrix over $\mathbb{Z}_d$. In other words, a Clifford operator has a classical representation given by
\begin{eqnarray}\label{classical2.1}
&&W = \left[
\begin{array}{llll}
a_1 &b_1\\
a_2 &b_2
\end{array}
\right]
\end{eqnarray}
where the entries are over $\mathbb{Z}_d$ and $\det (W)=a_1b_2-a_2b_1\equiv 1$ (mod $d$). It performs the operation $X^sZ^t\mapsto X^{a_1s+b_1t}Z^{a_2s+b_2t}$, in other words, the action of the Clifford operator $W$ on a GPM $(s,t)$ is given by the linear transformation:
$$\left[\begin{array}{llll} a_1 &b_1\\ a_2 &b_2 \end{array} \right]\left[\begin{array}{llll} s\\ t \end{array} \right]=\left[\begin{array}{llll} a_1s+b_1t\\ a_2s+b_2t \end{array} \right].$$
In this paper, we use the same letter to denote a Clifford operator and its matrix representation.
Two LU-equivalent GBS sets demonstrate identical effectiveness and versatility. For many quantum information processing tasks, it suffices to consider the LU-equivalence between $n$-GBS sets (or the U-equivalence between $n$-GPM sets) rather than analyzing U-equivalence between all GPMs \cite{sing2017pra,tian2016pra,wu-tian2018pra,yuan2022quantum,wang2025epj,zhou2024pra}. This motivates the introduction of local Clifford operators.
\begin{definition}{\rm (Local Clifford operator)}
Let $\mathcal{M}=\{X^{s_1}Z^{t_1}, X^{s_2}Z^{t_2}, \cdots ,X^{s_n}Z^{t_n}  \}$ be a set of GPMs on $\mathbb{C}^d$. A local Clifford operator acting on $\mathcal{M}$
is a unitary operator on $\mathbb{C}^d$ that maps $\mathcal{M}$ to an $n$-GPM set under conjugation (up to a global phase).
\end{definition}
For two sets of unitary matrices $\mathcal{M}$ and $\mathcal{N}$, if there exists a unitary operator $U$ such that
\begin{eqnarray*}
U\mathcal{M}U^\dag\approx \mathcal{N},
\end{eqnarray*}
where "$\approx$" denotes equality up to a global phase, then they are called \textit{unitary conjugate equivalent} (UC-equivalent), denoted by $\mathcal{M}\mathop{\sim}\limits^U \mathcal{N}$. The transformation between $\mathcal{M}$ and $\mathcal{N}$ is called \textit{a unitary conjugate transformation} (UC-transformation). Obviously, if $\mathcal{M}$ and $\mathcal{N}$ are both sets of GPMs, then any local Clifford operator that maps $\mathcal{M}$ to $\mathcal{N}$ is precisely a UC-transformation between $\mathcal{M}$ and $\mathcal{N}$.

It is well known, for an element $x$ of a group, its order refers to the smallest positive integer $m$ with the property $x^{m}=I$ or the infinity if  $x^{m}\neq I$ for all $m\neq 0$. We denote the order of $x$ by $O(x)$. Since the global phase does not affect the LU-equivalence between quantum states,
in \cite{wang2025epj}, the authors introduced essential order and essential power, which are useful for studying LU-equivalence.
\begin{definition}[\cite{wang2025epj}]
Let $U$ be a GPM defined on $\mathbb{C}^d$. A positive integer $a$ is said to be the \textit{essential order} of $U$ if it is the smallest positive integer such that $U^a\approx I$, where "$\approx$" denotes equality up to a global phase. We denote the \textit{essential order} by $O_e(U)$. When $U\not\approx I$, we call $d/O_e(U)$ the \textit{essential power} of $U$ and denote it as $P_e(U)$. When $U\approx I$, let $P_e(U)=0$.
\end{definition}
For each GPM $X^sZ^t$ on $\mathbb{C}^d$, it is easy to know that its essential order
$$O_e(X^sZ^t)=\frac{d}{\gcd(s,t,d)}$$
and, in turn, its essential power
$$P_e(X^sZ^t)=\gcd(s,t,d),$$
where $\gcd(s,t,d)$ refers to the greatest common divisor of $s, t$ and $d$.

For example, let $X$ and $Z$ be Pauli matrices on $\mathbb{C}^6$, then $P_e(X^4)=\gcd(4,0,6)=2$, $P_e(X^2Z^2)=\gcd(2,2,6)=2$ and $P_e(X^3Z^4)=\gcd(3,4,6)=1$.

For a set of $n$-GPMs, it has a corresponding essential power vector \cite{wang2025epj}.
\begin{definition}[\cite{wang2025epj}]
Let $\mathcal{M}=\{M_1, M_2, \cdots , M_n \}$ be a GPM set and $x=( P_e(M_1), P_e(M_2), \cdots , P_e(M_n) )^T$ be a $n$-dimensional real vector.
The \textit{essential power vector} of $\mathcal{M}$ is defined as $P_e(\mathcal{M})=x^\uparrow$,
where the symbol $\uparrow$ means that the elements in the vector taking ascending order.
\end{definition}
For example, if $\mathcal{M}=\{X, X^6Z^6, X^9, X^4Z^2, X^5Z^8\}$ is a GPM set on $\mathbb{C}^{12}$,
then $P_e(X)=1$, $P_e(X^6Z^6)=6$, $P_e(X^9)=3$, $P_e(X^4Z^2)=2$, $P_e(X^5Z^8)=1$, and  $P_e(\mathcal{M})=(1, 1, 2, 3, 6)^T$.

In \cite{wang2025epj}, the authors showed a necessary and sufficient condition for the UC-equivalence between two GPMs.
\begin{lemma}[\cite{wang2025epj}]\label{lem2.1}
Any nontrivial GPM $X^sZ^t (\neq I)$ is UC equivalent to $Z^b$, where $b=\gcd(s,t,b)$ . Consequently, two GPMs are UC equivalent if and only if they have the same essential power.
\end{lemma}
See Figure \ref{fig2.1} for the visualization of Lemma \ref{lem2.1}.

\begin{figure}[htbp]
\centering
\begin{tikzpicture}[
    box/.style={rectangle, draw=blue, thick, minimum width=2cm, minimum height=1cm, fill=blue!10},
    input/.style={circle, draw=black, thick, fill=green!20, text width=1cm, minimum height=0.5cm, align=center},
    output/.style={circle, draw=black, thick, fill=green!20, text width=1cm, minimum height=0.5cm, align=center},
     cond/.style={ellipse, draw=red, thick, minimum width=1.5cm, minimum height=1.2cm, fill=red!10},
    arrow/.style={->, >=stealth, thick}
]

\node[box] (U) at (0,0) {\Large$UC$};

\node[input,left=1cm of U] (input) {$X^sZ^t$};
\node[output,right=1cm of U] (output) {$X^{s^{\prime}}Z^{t^{\prime}}$};

\node[cond] (cond1) at (0,1.5) {$\gcd(s,t,d) =\gcd(s^{\prime},t^{\prime},d)$};

\draw[arrow] (input) -- (U);
\draw[arrow] (U) -- (output);
\draw[arrow, red, dashed] (cond1) -- (U);

\end{tikzpicture}
\caption{There exists a local Clifford operator mapping the GPM $X^sZ^t$ to $X^{s^{\prime}}Z^{t^{\prime}}$ if and only if their essential powers are the same.}
\label{fig2.1}
\end{figure}

In Lemma \ref{lem2.1}, the UC-transformation from $X^sZ^t$ to $Z^b$ can be implemented by sequentially applying two Clifford operators.  Let $\gcd(s,t)=m$, $\gcd(m,d)=b$. There exist integers $p,q,p^{\prime},q^{\prime}$ such that $ps+qt=m, p^{\prime}d+q^{\prime}m=b$.
Let $C_{\gcd(s,t)}$ and $C_{\gcd(d,m)}$ be two Clifford operators with the corresponding classical representation
$$C_{\gcd(s,t)}=\left[ \begin{array}{llll} t/m &-s/m\\ \ p &\ \ q \end{array} \right],$$
$$C_{\gcd(d,m)}=\left[ \begin{array}{llll} m/b &-d/b\\ \ p^{\prime} &\ \ q^{\prime} \end{array} \right].$$
We get $X^sZ^t\mathop{\sim}\limits^{C_{\gcd(s,t)}} Z^m \mathop{\sim}\limits^{C_{\gcd(d,m)}} Z^b$, and denote the composite of $C_{\gcd(s,t)}$ and $C_{\gcd(d,m)}$ as $C_{\gcd(s,t,d)}$, which is a local Clifford operator that maps $X^sZ^t$ to $Z^{\gcd(s,t,d)}$.

According to Lemma \ref{lem2.1}, it is known that two UC-equivalent GPM sets have the same essential power vectors. This necessary condition for UC-equivalence can be used to determine that two GPM sets are not UC-equivalent.

In the next section, based on Lemma \ref{lem2.1}, we establish necessary and sufficient conditions for a 2-GPM set to be UC-equivalent to a special 2-GPM set $\{ X^{a}, Z^{b} \}$ (where $a$ and $b$ are positive factors of $d$) defined on $\mathbb{C}^{d}$, with $d\geq 2$ being an integer. These are the necessary and sufficient conditions for a local Clifford operator to act on a 2-GPM set $\{X^a, Z^b\}$.

\section{Local Clifford operators acting on 2-GPM sets}
In this section, we consider the local Clifford operators acting on a $2$-GPM set $\{X^a, Z^b\}$, where $a$ and $b$ are positive factors of $d$ with $a, b\neq d$.
The identity matrix is always mapped to itself under any local Clifford operator. Henceforth, unless otherwise stated, all GPM sets we consider exclude the identity matrix.

It is known that the essential power of a GPM and the commutation coefficient between two GPMs are invariant under UC-transformations.
Furthermore, it turns out that these two properties are essential features of local Clifford operators acting on $2$-GPM sets.

\begin{lemma}\label{lem3.1}
If $a$ and $b$ are both positive factors of $d$, then two nontrivial GPMs $X^a$ and $Z^b$ on $\mathbb{C}^d$ are UC-equivalent to $X^{ua}$ and $Z^b$ respectively through a unitary operator $W$ if and only if  $\gcd(u,d/a)=1$ and $uab\equiv ab$ (mod $d$).
\end{lemma}

Note that the condition $\gcd(u, d/a) = 1$ is equivalent to $\gcd(ua, d) = a$. This means the local Clifford operator $W$ preserves the essential power of $X^a$ (i.e., $X^a$ remains a GPM of the same essential power under conjugation by $W$). The commutation coefficient between two GPMs $X^a$ and $Z^b$ is $\omega^{ab}$, where $\omega$ is a primitive $d$-th root of unity. Similarly, the condition $uab \equiv ab \pmod{d}$ implies that the local Clifford operator $W$ preserves the commutation relation between $X^a$ and $Z^b$.
In other words, Lemma \ref{lem3.1} says that the preservation of both essential power and commutation coefficients completely characterizes such local Clifford operators acting on the 2-GPM sets $\{X^a, Z^b\}$.

When $ a = b = 1 $, the local Clifford operator in Lemma 2 is actually a Clifford operator, its characteristic condition simplifies to $ u \equiv 1 \pmod{d} $. This is precisely the feature of the matrix in the classical representation of a Clifford operator. Therefore, the characteristics of the local Clifford operator in Lemma 2 can be regarded as its classical representation by the matrix $\left[ \begin{array}{llll} u &0\\ 0 &1 \end{array} \right]$, where $uab\equiv ab\pmod{d}$.
See Figure \ref{fig3.1} for the schematic diagram of Lemma \ref{lem3.1}.

\begin{figure}[h]
\centering
\begin{tikzpicture}[
    box/.style={rectangle, draw=blue, thick, minimum width=2cm, minimum height=1cm, fill=blue!10},
    cond/.style={ellipse, draw=red, thick, minimum width=1.5cm, minimum height=1.2cm, fill=red!10},
    arrow/.style={->, >=stealth, thick}
]

\node[box] (U) at (0,0) {\Large$W$};

\node[left=1cm of U] (input) {$\{X^a, Z^b\}$};
\node[right=1cm of U] (output) {$\{X^{ua}, Z^{b}\}$};

\node[cond] (cond1) at (-2,1.5) {$\gcd(u, d/a) = 1$};
\node[cond] (cond2) at (2,1.5) {$uab \equiv ab \pmod{d}$};

\draw[arrow] (input) -- (U);
\draw[arrow] (U) -- (output);
\draw[arrow, red, dashed] (cond1) -- (U);
\draw[arrow, red, dashed] (cond2) -- (U);

\node[below=0.2cm of U ] {Visualization of Lemma \ref{lem3.1}: The classical representation};
\end{tikzpicture}
\caption{ There exists a local Clifford operator mapping the GPM pair $\{X^a, Z^b\}$ to $\{X^{ua}, Z^b\}$ if and only if the two conditions in red ellipses hold.}
\label{fig3.1}
\end{figure}

\begin{proof}
First, we demonstrate that the two conditions, preservation of essential power and commutation relations, are sufficient to construct a local Clifford operator.  If $\gcd(u,d/a)=1$ and $uab\equiv ab$ (mod $d$), we define an operator $W$ as follows
\begin{eqnarray*}
W|i+ja\rangle = |i+uja\rangle, 0\leq i < a, 0\leq j < d/a.
\end{eqnarray*}
If $i+uja=i'+uj'a$ (mod $d$), that is, $i-i'\equiv u(j'-j)a$ (mod $d$), then $i=i', j=j'$ because of $\gcd(u,d/a)=1$ and {\color{blue}$-a< i-i'< a$}. Thus $W$ is a permutation operator and is unitary.
According to the condition $uab\equiv ab$ (mod $d$), we have
\begin{align*}
WZ^bW^\dag=&\sum_{i,j,i',j',i'',j''}\omega ^{b(i'+j'a)} |i+uja\rangle \langle i+ja|i'+j'a\rangle \\
&\langle i'+j'a|i''+j''a\rangle \langle i''+uj''a| \\
=&\sum_{i,j}\omega ^{b(i+ja)} |i+uja\rangle \langle i+uja|\\
=&\sum_{i,j}\omega ^{b(i+uja)} |i+uja\rangle \langle i+uja|=Z^b,\\
WX^aW^\dag=&\sum_{i,j,i',j',i'',j''} |i+uja\rangle \langle i+ja|i'+(j'+1)a\rangle \\
&\langle i'+j'a|i''+j''a\rangle \langle i''+uj''a| \\
=&\sum_{i,j} |i+u(j+1)a\rangle \langle i+uja|=X^{ua}.
\end{align*}
So the two GPMs $X^a$ and $Z^b$ are UC equivalent to $X^{ua}$ and $Z^b$ respectively through the unitary operator $W$.

Conversely, for an arbitrary local  Clifford operator, it can be directly verified that it preserves both the essential power and commutation relations. The detailed proof is provided in Appendix A.
\end{proof}

Note that the unitary operator $W$ in Lemma \ref{lem3.1} is a local Clifford operator, but not necessarily a Clifford operator.
For example, for $a=6$, $b=12$, $u=5$, and $d=72$, it is clear that
\[
\gcd(5, 72/6) = 1 \quad \text{and} \quad 5 \times 6 \times 12 \equiv 6 \times 12 \pmod{72}.
\]
Then, Lemma~\ref{lem3.1} implies that the GPMs $X^6$ and $Z^{12}$ on $\mathbb{C}^{72}$ are UC-equivalent to $X^{30}$ and $Z^{12}$, respectively, via a unitary operator $W$. However, $W$ is not a Clifford operator because the matrix in the classical representation does not satisfy the condition $ u \equiv 1 \pmod{d} $.

By Eq. \eqref{eq6} in Appendix A, if two general GPMs are UC-equivalent to $X^{a}$ and $Z^{b}$ respectively, they necessarily take the forms $X^{\mu a}Z^{\sigma a}$ and $X^{\eta b}Z^{\nu b}$. Based on Lemma \ref{lem3.1}, we can now present the characteristics (or classical representation) of a general local Clifford operator acting on two GPMs $X^{a}$ and $Z^{b}$.

\begin{theorem}\label{th3.1}
If $a$ and $b$ are positive factors of $d$,
then two nontrivial GPMs $X^{a}$ and $Z^{b}$ are UC equivalent to
$X^{u_{1} a}Z^{v_{1} a}$ and $X^{u_{2} b}Z^{v_{2} b}$ respectively
if and only if
\begin{eqnarray}\label{eq1th3.1}
\gcd(u_{1}, v_{1}, d/a)=\gcd(u_{2}, v_{2}, d/b)=1,
\end{eqnarray}
\begin{eqnarray}\label{eq2th3.1}
(u_{1}v_{2}-u_{2}v_{1})ab\equiv ab \pmod{d},
\end{eqnarray}
\begin{eqnarray}\label{eq3th3.1}
\gcd(u_{1}v_{2}-u_{2}v_{1}, d/a, d/b)=1,
\end{eqnarray}
where $0 \leq u_{1}, v_{1}< d/a$ and  $0\leq u_{2}, v_{2}< d/b$.
\end{theorem}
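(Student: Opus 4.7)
The plan is to handle necessity and sufficiency separately, using UC-invariants for necessity and an explicit composition of Clifford operators with a permutation matrix of the type in Lemma \ref{lem3.2} for sufficiency.

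For necessity, assume a unitary $W$ implements the stated UC-equivalence. Condition \eqref{eq1th3.1} is read off from Lemma \ref{lem3.1}: since UC-conjugation preserves essential power, equating $P_e(X^{u_1 a}Z^{v_1 a}) = a\,\gcd(u_1,v_1,d/a)$ with $P_e(X^a)=a$ forces $\gcd(u_1,v_1,d/a)=1$, and similarly for the second pair. Condition \eqref{eq2th3.1} follows by conjugating the commutation relation $Z^b X^a=\omega^{ab}X^a Z^b$ through $W$ and expanding the two products $X^{u_2 b}Z^{v_2 b}\cdot X^{u_1 a}Z^{v_1 a}$ and $X^{u_1 a}Z^{v_1 a}\cdot X^{u_2 b}Z^{v_2 b}$ via $Z^m X^n=\omega^{mn}X^n Z^m$; matching the resulting phase factors $\omega^{u_1 v_2 ab}$ and $\omega^{ab+u_2 v_1 ab}$ gives \eqref{eq2th3.1}. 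For \eqref{eq3th3.1} I use a group-order argument: $W$ induces an isomorphism between $\langle X^a, Z^b\rangle/\{\text{global phases}\}$, which has order $d^2/(ab)$, and the subgroup $H$ of $\mathbb{Z}_d\times\mathbb{Z}_d$ generated in symplectic coordinates by $(u_1 a,v_1 a)$ and $(u_2 b,v_2 b)$. By \eqref{eq1th3.1} the two generators have orders $d/a$ and $d/b$ respectively, so $|H|$ attains the required value $d^2/(ab)$ precisely when the two cyclic subgroups intersect trivially, and a short Smith-normal-form computation of the $4\times 2$ relation matrix translates this triviality into $\gcd(u_1 v_2-u_2 v_1,d/a,d/b)=1$.

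For sufficiency, assume \eqref{eq1th3.1}--\eqref{eq3th3.1} hold and build $W$ in three steps. First, by \eqref{eq1th3.1} the essential power of $X^{u_1 a}Z^{v_1 a}$ equals $a$, so the two-stage Clifford construction in the proof of Lemma \ref{lem3.1} yields a Clifford $C_1$ with $C_1\,X^{u_1 a}Z^{v_1 a}\,C_1^\dagger = X^a$ (composed with a Fourier-type swap of $Z^a$ and $X^a$ if needed). Second, the same $C_1$ conjugates $X^{u_2 b}Z^{v_2 b}$ into some $X^{u'b}Z^{v'b}$, since Cliffords preserve essential power $b$; moreover the reduced pair $\bigl((a,0),(u',v')\bigr)$ inherits the analogues of \eqref{eq1th3.1}--\eqref{eq3th3.1} automatically, these being UC-invariants. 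Third, I construct a unitary $C_2$ that fixes $X^a$ up to a scalar and maps $X^{u'b}Z^{v'b}$ to $Z^b$; the composition $C_2\,C_1$ then provides $W^\dagger$, and so $W$.

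The main obstacle is the third step, because the Clifford stabilizer of $X^a$ is generally too small to carry every admissible $(u',v')$ to $(0,1)$ — the very phenomenon emphasized in Remark \ref{rem3.1}. The resolution is to take $C_2$ to be a non-Clifford permutation matrix built exactly as in Lemma \ref{lem3.2}, and the technical crux of the proof is to show that the reduced form of condition \eqref{eq3th3.1} delivers precisely the coprimality $\gcd(u,d/a)=1$ and the congruence $uab\equiv ab\pmod d$ that Lemma \ref{lem3.2} requires. Verifying this translation — that \eqref{eq3th3.1} is exactly what makes the final non-Clifford building block well-defined — is where the bulk of the calculation lies.
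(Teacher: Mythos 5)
Your necessity argument is sound, and for equation \eqref{eq3th3.1} it is genuinely different from the paper's: where the paper reduces via a Clifford operator to a pair $\{X^{ua}Z^{va},Z^b\}$ and derives a contradiction $X^{[ae,b]}\approx Z^{v[ae,b]}$ from the assumption $[ae,b]<d$, you count orders. Conjugation by $W$ is an isomorphism of $\langle X^a,Z^b\rangle$ modulo phases (order $d^2/(ab)$) onto the subgroup $H\leq\mathbb{Z}_d^2$ generated by $(u_1a,v_1a)$ and $(u_2b,v_2b)$, and the Smith-form computation gives $[\mathbb{Z}_d^2:H]=ab\cdot\gcd(u_1v_2-u_2v_1,d/a,d/b)$ once \eqref{eq1th3.1} is in force, so the index condition is exactly \eqref{eq3th3.1}. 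This is cleaner than the paper's argument and worth writing out in full.

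The sufficiency half, however, has a genuine gap at exactly the point you defer as ``where the bulk of the calculation lies.'' After your Clifford reduction the task is to realize $\{X^a,Z^b\}\sim\{X^{ua}Z^{va},Z^b\}$ under the reduced hypotheses $\gcd(u,v,d/a)=1$, $uab\equiv ab\ (\mathrm{mod}\ d)$ and $\gcd(u,d/a,d/b)=1$. Your plan is to feed this to a single permutation operator of the type in Lemma \ref{lem3.2}, but that lemma requires the full coprimality $\gcd(u,d/a)=1$ (for the map $|i+ja\rangle\mapsto|i+uja\rangle$ to be a well-defined permutation), and condition \eqref{eq3th3.1} only supplies the strictly weaker $\gcd(u,d/a,d/b)=1$. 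Concretely, for $d=12$, $a=2$, $b=3$, $u=3$ one has $\gcd(3,6,4)=1$ but $\gcd(3,6)=3$, so the claimed ``translation'' fails. This is precisely the obstruction the paper's Appendix B is built to overcome: it first proves the statement under the extra hypothesis $a\mid b$ (where $d/b\mid d/a$ makes $\gcd(u',d/b)=\gcd(u',d/a,d/b)=1$, so Lemma \ref{lem3.2} does apply), and then handles the general case by factoring $\mathbb{C}^d\cong\mathbb{C}^{be}\otimes\mathbb{C}^{a_1}$ with $a_1=\gcd(a,d/b)$, applying the restricted result inside the $\mathbb{C}^{be}$ factor, and correcting with a non-Clifford diagonal unitary $\sum_i Z_{be}^{vi}\otimes|i\rangle\langle i|$. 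None of this machinery is present in your outline, and without it the construction of $C_2$ does not go through.
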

See Figure \ref{fig3.2} for the visualization of Theorem \ref{th3.1}.

\begin{figure}[htbp]
\centering
\begin{tikzpicture}[
    box/.style={rectangle, draw=blue, thick, minimum width=2cm, minimum height=1cm, fill=blue!10},
    cond/.style={ellipse, draw=red, thick, minimum width=1.5cm, minimum height=1.2cm, fill=red!10},
    arrow/.style={->, >=stealth, thick}
]

\node[box] (U) at (-0.5,0) {\large$UC$};

\node[left=0.5cm of U] (input) {$\{X^a, Z^b\}$};
\node[right=0.5cm of U] (output) {$\{X^{u_1a}Z^{v_1a}, X^{u_2b}Z^{v_2b}\}$};

\node[cond] (cond1) at (-0.5,2.8) {$\gcd(u_1,v_1,d/a)=1=\gcd(u_{2}, v_{2}, d/b)$};
\node[cond] (cond2) at (2,1.5) {$(u_1v_2-u_2v_1)ab \equiv ab$};
\node[cond] (cond3) at (-0.5,-1.5) {$\gcd(u_1v_2-u_2v_1,d/a,d/b)=1$};

\draw[arrow] (input) -- (U);
\draw[arrow] (U) -- (output);
\draw[arrow, red, dashed] (cond1) -- (U);
\draw[arrow, red, dashed] (cond2) -- (U);
\draw[arrow, red, dashed] (cond3) -- (U);

\node[below=0.2cm of cond3] {Visualization of Theorem \ref{th3.1}: The classical representation};
\end{tikzpicture}

\caption{ There exists a local Clifford operator mapping the GPM pair $\{X^a, Z^b\}$ to $\{X^{u_{1} a}Z^{v_{1} a}, X^{u_{2} b}Z^{v_{2} b}\}$ if and only if the three conditions in red ellipses hold.}
\label{fig3.2}
\end{figure}

When $ a = b = 1 $, the characteristic conditions \eqref{eq1th3.1}-\eqref{eq3th3.1} simplify to $u_{1}v_{2}-u_{2}v_{1}\equiv 1 \pmod{d} $. Thus, the local Clifford operator in Theorem \ref{th3.1} is simply a Clifford operator with the classical representation
$\left[ \begin{array}{llll} u_{1} &u_{2}\\ v_{1} &v_{2} \end{array} \right]$, where $u_{1}v_{2}-u_{2}v_{1}\equiv 1\pmod{d}$.
See Figure \ref{fig3.3} for the schematic diagram of a Clifford operator. 

\begin{figure}[htbp]
\centering
\begin{tikzpicture}[
    box/.style={rectangle, draw=blue, thick, minimum width=2cm, minimum height=1cm, fill=blue!10},
    input/.style={circle, draw=black, thick, fill=green!20, text width=1cm, minimum height=0.5cm, align=center},
    output/.style={circle, draw=black, thick, fill=green!20, text width=1cm, minimum height=0.5cm, align=center},
     cond/.style={ellipse, draw=red, thick, minimum width=1.5cm, minimum height=1.2cm, fill=red!10},
    arrow/.style={->, >=stealth, thick}
]

\node[box] (U) at (-1.5,0) {\Large$UC$};

\node[left=0.5cm of U] (input) {$\{X, Z\}$};
\node[right=0.5cm of U] (output) {$\{X^{u_1}Z^{v_1}, X^{u_2}Z^{v_2}\}$};

\node[cond] (cond1) at (0,1.5) {$u_{1}v_{2}-u_{2}v_{1}\equiv 1 \pmod{d} $};

\draw[arrow] (input) -- (U);
\draw[arrow] (U) -- (output);
\draw[arrow, red, dashed] (cond1) -- (U);

\node[below=0.2cm of U](title) at (-0.2,-0.5) {Schematic diagram of a Clifford operator};

\end{tikzpicture}
\caption{There exists a Clifford operator mapping the GPM pair $\{X, Z\}$ to $\{X^{u_{1}}Z^{v_{1}}, X^{u_{2}}Z^{v_{2}}\}$ if and only if $u_{1}v_{2}-u_{2}v_{1}\equiv 1 \pmod{d} $.}
\label{fig3.3}
\end{figure}

Therefore, the characteristics of the local Clifford operator in Theorem \ref{th3.1} can be described by its classical representation matrix
$\left[ \begin{array}{llll} u_{1} &u_{2}\\ v_{1} &v_{2} \end{array} \right]$, where $(u_{1}v_{2}-u_{2}v_{1})ab\equiv ab\pmod{d}$. For convenience, we denote the local Clifford operator in Theorem \ref{th3.1} by $L_{(a,b)}$. The conditions $0\leq u_{1},v_{1}< d/a$ and $0\leq u_{2},v_{2}< d/b$ imply that the number of local Clifford operators  on $\{ X^{a}, Z^{b} \}$ is finite.

Here is an outline of the proof of Theorem \ref{th3.1}, with the detailed proof provided in Appendix B:
First, the UC-equivalence between GPM pair $\{X^{u_{1} a}Z^{v_{1} a},\ X^{u_{2} b}Z^{v_{2} b}\}$ and $\{X^{a},\ Z^{b}\}$ can be derived from the UC-equivalence between $\{X^{u_{1} a}Z^{v_{1} a},\ X^{u_{2} b}Z^{v_{2} b}\}$ and $\{X^{u a}Z^{v a},\ Z^{b}\}$, as well as the UC-equivalence between $\{X^{a},\ Z^{b}\}$ and $\{X^{u a}Z^{v a},\ Z^{b}\}$ with $u\equiv u_{1}v_{2}-u_{2}v_{1}$ (mod $d/b$).
Second, by examining the conditions required for the UC-equivalence between $\{X^{u_{1} a}Z^{v_{1} a},\ X^{u_{2} b}Z^{v_{2} b}\}$ and $\{X^{u a}Z^{v a},\ Z^{b}\}$ and those for the UC-equivalence between $\{X^{a},\ Z^{b}\}$ and $\{X^{u a}Z^{v a},\ Z^{b}\}$, we combine these two sets of conditions to form the necessary and sufficient conditions for the UC-equivalence between $\{X^{u_{1} a}Z^{v_{1} a},\ X^{u_{2} b}Z^{v_{2} b}\}$ and $\{X^{a},\ Z^{b}\}$.
As a result, the necessary and sufficient conditions for the UC-equivalence between $\{X^{u_{1} a}Z^{v_{1} a},\ X^{u_{2} b}Z^{v_{2} b}\}$ and $\{X^{a},\ Z^{b}\}$ consist of three conditions, that is, one more than the simpler necessary and sufficient conditions given in Lemma 2.

It should be noted that the condition, Eq. (\ref{eq3th3.1}), is indispensable in Theorem \ref{th3.1}, as it cannot be derived from Eqs. (\ref{eq1th3.1}) and (\ref{eq2th3.1}). For example, when $d=72, a=6, b=12$, take $u_1=2, v_1=11, u_2=4, v_2=5$. Obviously, they satisfy Eqs. (\ref{eq1th3.1}) and (\ref{eq2th3.1}), but $u_{1}v_{2}-u_{2}v_{1}=-34$ and $\gcd(u_{1}v_{2}-u_{2}v_{1}, d/a, d/b)=\gcd(-34, 12, 6)=2$. Then Eq. (\ref{eq3th3.1}) does not hold.

Next, we consider two simple special cases.
\begin{remark}
For a GPM set $\mathcal{M}_z=\{Z^{t_1}, Z^{t_2}, \cdots ,Z^{t_n}  \}$, let $b=\gcd(t_1, t_2, \cdots , t_n, d)$, then it is clear that a unitary operator is a local Clifford operator acting on $\mathcal{M}_z$ if and only if it maps $Z^b$ to a GPM. By Lemma \ref{lem2.1}, we know that $Z^b$ is UC-equivalent to the GPM $X^{ub}Z^{vb}$ with $\gcd(u,v,d/b)=1$ through the Clifford operator $C^{-1}_{\gcd(ub,vb,d)}$. Then, all local Clifford operators on $\mathcal{M}_z$ can be realized through Clifford operators.

For a GPM set $\mathcal{M}_x=\{X^{s_1}, X^{s_2}, \cdots, X^{s_n}  \}$, it is known that $X$ and $Z$ can be interchanged via the Clifford operators 
$\left[ \begin{array}{ccc} 0 &1\\ -1 &0 \end{array} \right]$ or $\left[ \begin{array}{ccc} 0 &-1\\ 1 &0 \end{array} \right]$. 
Therefore, all local Clifford operators on $\mathcal{M}_x$ can also be realized using Clifford operators. In subsequent analyses, these two cases will no longer be considered.
\end{remark}

In the next section, we present the classical representation of local Clifford operators acting on an arbitrary $n$-GPM set with $n\ge 2$.

\section{The classical representation of local Clifford operators}

In this section, we investigate local Clifford operators acting on any $n$-GPM set defined on $C^{d}$. It turns out that an arbitrary local Clifford operator on a given $n$-GPM set can be decomposed into a sequence of Clifford operators, followed by a local Clifford operator acting on a special 2-GPM set (as in Theorem \ref{th3.1}). The classical representation of a general local Clifford operator is given in \eqref{classical4.1} below.

\subsection{For any dimension $d$}
In this subsection, we consider quantum systems of dimension $d$. Let $\mathcal{M}=\{X^{s_1}Z^{t_1}, X^{s_2}Z^{t_2}, \cdots ,X^{s_n}Z^{t_n}  \}$ be a GPM set with essential power vector $P_e(\mathcal{M})=(d_1,d_2,\cdots ,d_n)^T$. Without loss of generality, assume $d_1$ is the essential power of $X^{s_1}Z^{t_1}$. Applying the Clifford operator $C_{\gcd(s_1,t_1,d)}$, we can map $\mathcal{M}$  to
$\mathcal{M}^{\prime}=\{Z^{d_1}, X^{s_2^{\prime}}Z^{t_2^{\prime}}, \cdots ,X^{s_n^{\prime}}Z^{t_n^{\prime}}  \}$.
If $d_1$ satisfies the following condition
\begin{eqnarray*}
\text{Condition (I)}: d_1\mid t_2^{\prime}, \cdots ,t_n^{\prime},
\end{eqnarray*}
then by taking $a=\gcd(s_2^{\prime}, \cdots ,s_n^{\prime}, d)$ and $b=d_1$, the GPM sets $\mathcal{M}^{\prime}$ and $\{X^a,Z^b \}$ share identical local Clifford operators (see Lemma \ref{lemappd} in Appendix C for details). In other words, a unitary operator is a local Clifford operator on $\mathcal{M}^{\prime}$ if and only if it is a local Clifford operator on $\{X^a,Z^b \}$.  Therefore, every local Clifford operator on $\mathcal{M}$ has the form
$$L_{(a,b)}\circ C_{\gcd(s_1,t_1,d)},$$
where $L_{(a,b)}$ denotes a local Clifford operator on $\{X^a,Z^b \}$ as in Theorem \ref{th3.1}. If the GPM set $\mathcal{M}'$  does not satisfy Condition (I), it can always be transformed\textemdash using the {\it division algorithm}\textemdash into a GPM set that satisfies Condition (I). In fact, let $t_2^{\prime}=p_2d_1+r_2,\cdots ,t_n^{\prime}=p_nd_1+r_n$, where $0\leq r_2, \cdots ,r_n < d_1$ and $r_2, \cdots ,r_n$ are not all zero, and $\mathcal{M}_1=\{ Z^{d_1}, X^{s_2}Z^{r_2}, \cdots ,X^{s_n}Z^{r_n}  \}$. It is easy to see GPM sets $\mathcal{M}_1$ and $\mathcal{M}'$ share identical local Clifford operators. Consider $\mathcal{M}_1$ as $\mathcal{M}$ at the beginning of this section and repeat the above process until the resulting GPM set satisfies Condition (I). 
We can summarize the above discussion on any $n$-GPM set $\mathcal{M}$ with a schematic diagram (see Figure \ref{fig4.1}).

\begin{figure}[htbp]
\centering
\begin{tikzpicture}[
    node distance=0.8cm,
    op/.style={draw, rectangle, minimum width=1.2cm, minimum height=0.8cm, fill=blue!20},
    set/.style={ellipse, draw, minimum width=2cm, minimum height=0.8cm, fill=green!20},
    arrow/.style={-Stealth, thick}
]

\node[set] (1) {Let $\mathcal{M}$ be a $n$-GPM set};

\node[set, below=of 1] (2) {$\mathcal{M}^{\prime}$ does not satisfy Condition (I)};
\draw[arrow] (1) -- node[right] {applying a Clifford operator $C_{0}$} (2);

\node[set, below=of 2] (3) {$\mathcal{M}_{1}$};
\draw[arrow] (2) -- node[right] {division algorithm} (3);

\node[set, below=of 3] (4) {$\mathcal{M}^{\prime}_{1}$ does not satisfy Condition (I)};
\draw[arrow] (3) -- node[right] { $C_{1}$} (4);

\node[op, below=of 4, text width=6cm, align=center] (5) {Apply the division algorithm and Clifford operators repeatedly and in sequence};
\draw[arrow] (4) -- node[right] { } (5);

\node[set, below=of 5] (6) {$\mathcal{M}^{\prime}_{m}$ satisfying Condition (I)};
\draw[arrow] (5) -- node[right] { } (6);

\end{tikzpicture}
\caption{Schematic diagram of the procedure for transforming a given $n$-GPM set, $\mathcal{M}$, into a GPM set, $\mathcal{M}^{\prime}_{m}$, satisfying Condition (I).}
\label{fig4.1}
\end{figure}

The local Clifford operator on the $n$-GPM set $\mathcal{M}^{\prime}_{m}$ is  a local Clifford operator  $L_{(a,b)}$ as in Theorem \ref{th3.1}. Thus, each local Clifford operator on the GPM set $\mathcal{M}$ has the form
\begin{eqnarray}\label{classical4.1}
L_{(a,b)}\circ C_m\circ\cdots \circ C_0,
\end{eqnarray}
where $C_{i}$ ($i=0,\cdots,m$) is a Clifford operator on $\mathcal{M}_{i}$ (as in Lemma \ref{lem2.1}). Consequently, a local Clifford operator is classically represented as a product of a series of two by two symplectic matrices and a specific two by two matrix determined by Theorem \ref{th3.1}. Since the number of possible operators $L_{(a,b)}$ is finite, we can use the classical representations (\ref{classical2.1}) and (\ref{classical4.1}) to:
(i) Generate all $n$-GPM sets that are UC-equivalent to a given $n$-GPM set, and
(ii) Determine whether any two given $n$-GPM sets are UC-equivalent to each other. 

We now present an example demonstrating how to derive the classical representation (\ref{classical4.1}) using the schematic diagram shown in Figure \ref{fig4.1}.
\begin{example}
For the quantum system $\mathbb{C}^{12}$, let $\mathcal{M}= \{ X^2Z^2, Z^3, X^4Z^8, X^6 \}$ be a $4$-GPM set, then $P_e(\mathcal{M})=\{2, 3, 4, 6 \}$. Applying the Clifford operator
$$C_{gcd(2,2)}=\left[ \begin{array}{llll} 1 &-1\\ 2 &-1 \end{array} \right]$$
to $\mathcal{M}$, we obtain the GPM set $\mathcal{M}^{\prime}= \{ Z^2, X^9Z^9, X^8, X^6 \}$. Clearly, the GPM set $\mathcal{M}^{\prime}$ does not satisfy Condition (I).
Using the division algorithm, we obtain a GPM set $\mathcal{M}_1= \{ Z^2, X^9Z^1, X^8, X^6 \}= \{  X^9Z^1, Z^2, X^8, X^6 \}$. Applying the Clifford operator
$$C_{\gcd(9,1)}=\left[ \begin{array}{llll} 1 &3\\ 1 &4 \end{array} \right]$$
to $\mathcal{M}_1$, we obtain the GPM set
$$\mathcal{M}_1^{\prime}= \{ Z, X^6Z^8, X^8Z^8, X^6Z^6 \}.$$
The GPM set $\mathcal{M}_1^{\prime}$ clearly satisfies Condition (I), then, by taking $a=\gcd(6,8,6,12)=2$ and $b=1$, any local Clifford operator acting on $\mathcal{M}$ can be expressed in the form
$$ L_{(2,1)}\circ C_{\gcd(9,1)}\circ C_{\gcd(2,2)}.$$
According to Theorem \ref{th3.1}, the operator $L_{(2,1)}$ has the matrix representation $\left[ \begin{array}{llll} u_{1} &u_{2}\\ v_{1} &v_{2} \end{array} \right],$ where $\gcd(u_1, v_1, 6)=\gcd(u_2, v_2, 12)=1$, $u_1v_2-u_2v_1\equiv 1 \pmod{6}$, and $0\leq u_1, v_1 < 6$, $0\leq u_2, v_2 < 12$. 
\end{example}

\subsection{$d=p^\alpha $}
In this subsection, we consider the case where $d=p^\alpha$, with $p$ prime and $\alpha$ a positive integer.
It turns out that any local Clifford operator can be represented as a composition of a Clifford operator and a local Clifford operator on a 2-GPM set, see (\ref{classical4.2}).

Consider a GPM set $\mathcal{M}= \{ X^{s_1}Z^{t_1}, \dots , X^{s_n}Z^{t_n} \}$, with essential power vector $P_e(\mathcal{M})=( d_1, \dots , d_n )$, where each $d_i$ is a prime power. Suppose that $d_1=p^\beta$ (where $0\leq \beta <\alpha$). Without loss of generality, we may assume $P_e(X^{s_1}Z^{t_1})=p^\beta$. Then, the Clifford operator $C_{\gcd(s_1, t_1, p^\alpha)}$ transforms $\mathcal{M}$ into the GPM set $\mathcal{M}'= \{ Z^{p^\beta},  X^{s'_2}Z^{t'_2}, \cdots , X^{s'_n}Z^{t'_n} \}$. Since UC-transformations preserve essential powers,
$$ \gcd(s'_i, t'_i, p^\alpha)=P_e(X^{s'_i}Z^{t'_i})=d_i\geq p^\beta .$$
Hence $p^\beta |t'_i$ ($i=2, \cdots , n$) and the GPM set $\mathcal{M}'$ satisfies Condition (I).
Then, taking  $p^\gamma=\gcd(s'_2, \cdots, s'_n, p^\alpha)$ with $0\leq \gamma <\alpha$, any local Clifford operator on $\mathcal{M}$ can be expressed in the form
\begin{eqnarray}\label{classical4.2}
L_{(p^\gamma,p^\beta)}\circ C_{\gcd(s_1, t_1, p^\alpha)},
\end{eqnarray}
where $L_{(p^\gamma,p^\beta)}$, as in Theorem \ref{th3.1}, is a local Clifford operator acting on $\{ X^{p^\gamma}, Z^{p^\beta} \}$. Consequently, every such operator admits a classical representation as a product of at most a two by two symplectic matrix and a specific  two by two matrix.

Since an integer is coprime with $p^t$ ($t > 0$) if and only if it is coprime with $p$, the conditions on local Clifford operators in Theorem \ref{th3.1} can be simplified. That is, the GPMs $X^{p^\gamma}$ and $Z^{p^\beta}$ are UC-equivalent to GPMs $X^{u_1 p^{\gamma}}Z^{v_1 p^{\gamma}}$ and $X^{u_2 p^\beta}Z^{v_2 p^\beta}$, respectively, if and only if the variables $u_i, v_i(i=1,2)$ satisfy the following conditions: $0 \leq u_1, v_1 < p^{\alpha -\gamma}$ and  $0\leq u_2, v_2  < p^{\alpha -\beta}$, and 
\begin{enumerate}
\item[{\rm(a)}] If $\beta +\gamma < \alpha$, then $(u_1v_2 -u_2v_1)p^{\beta +\gamma} \equiv p^{\beta +\gamma}\pmod{p^{\alpha}}.$
\item[{\rm(b)}] If $\beta +\gamma \geq \alpha$, then $\gcd(u_1v_2 -u_2v_1 , p)=1$.
\end{enumerate}
In particular, when $d=p^{\alpha}$ and $\beta=\gamma =0$, the local Clifford operator in Theorem \ref{th3.1} reduces to a Clifford operator with the classical representation
$\left[ \begin{array}{llll} u_{1} &u_{2}\\ v_{1} &v_{2} \end{array} \right]$ satisfying the condition $u_{1}v_{2}-u_{2}v_{1}\equiv 1\pmod{p^{\alpha}}$.

We now present an example demonstrating how to derive the classical representation (\ref{classical4.2}) using the schematic diagram shown in Figure \ref{fig4.1}.

\begin{example}
For the dimension $d=3^4$, let $\mathcal{M}= \{ X^2Z^4, X^3, X^9Z^{18}, Z^6 \}$ with $P_e(\mathcal{M})=\{ 1, 3, 3, 9 \}$. Applying the Clifford operator
$$C_{\gcd(2,4,3^4)}=\left[ \begin{array}{ccc} 2 &\ 0\\ -1 &41 \end{array} \right]\left[ \begin{array}{ccc} \ \ 2 &-1\\ -1 &\ \ 1 \end{array} \right]=\left[ \begin{array}{ccc} \ 4 &-2\\ 38 &\ 42 \end{array} \right]$$
to $\mathcal{M}$, we obtain the GPM set
$$\mathcal{M}'=\{ Z, X^{12}Z^{33}, Z^{25}, X^{69}Z^{9}  \}.$$
The GPM set $\mathcal{M}'$ clearly satisfies Condition (I).
Since $\gcd(12, 0, 69, 3^4)=3$ and $\beta +\gamma=0+1<4= \alpha$, any local Clifford operator on $\mathcal{M}$  can be expressed in the form
$$L_{(3, 1)}\circ C_{\gcd(2,4,3^4)}$$
with $L_{(3, 1)}=\left[ \begin{array}{llll} u_1 &u_2\\ v_1 &v_2 \end{array} \right]$, where $u_1v_2 -u_2v_1 \equiv 1\pmod{3^3}$ and $0 \leq u_1, v_1 < 3^3$, $0\leq u_2, v_2 < 3^4$.
\end{example}
It is evident that for  a quantum system with prime dimension $p$ (i.e., for the Hilbert space  $\mathbb{C}^p$), the local Clifford operators coincide with the Clifford operators.

\subsection{$d=pq$}
In this subsection, we consider the case where $d=pq$, with $p$ and $q$ being distinct primes and $p< q$. We show that any local Clifford operator can be represented as a product of at most two Clifford operators and a local Clifford operator acting on a specific binary GPM set, see (\ref{classical4.3})-(\ref{classical4.6}).

Consider a GPM set $\mathcal{M}= \{ X^{s_1}Z^{t_1}\cdots , X^{s_n}Z^{t_n} \}$ with $P_e(\mathcal{M})=( d_1, \dots , d_n )$. Suppose that $P_e(X^{s_1}Z^{t_1})=d_1$. If $d_1=1$, then the local Clifford operators on $\mathcal{M}$ have the form
\begin{eqnarray}\label{classical4.3}
L_{(a,1)}\circ C_{\gcd(s_1,t_1,pq)}.
\end{eqnarray}

If $d_1=p$, then there are two cases. One is $P_e(M)=(p, p, \dots , p)$. The local Clifford operators have the form
\begin{eqnarray}\label{classical4.4}
L_{(p,p)}\circ C_{\gcd(s_1,t_1,pq)}.
\end{eqnarray}
The other case is $P_e(M)=(p, p, \dots p, q, \dots , q)$. The local Clifford operators have the form (see Appendix D for details)
\begin{eqnarray}\label{classical4.5.1}
L_{(a,p)}\circ C_{\gcd(s_1,t_1,pq)}\ {\rm or} \\
L_{(a,1)}\circ C_{\gcd(s'_n, r_n, pq)}\circ C_{\gcd(s_1,t_1,pq)}.
\label{classical4.5.2}
\end{eqnarray}

If $d_1=q$, then $P_e(\mathcal{M})=(q, q, \cdots , q)$ and the local Clifford operators have the form
\begin{eqnarray}\label{classical4.6}
L_{(q,q)}\circ C_{\gcd(s_1,t_1,pq)}.
\end{eqnarray}

We thus conclude that in a $pq$-dimensional quantum system,  any local Clifford operator can be represented as a product of at most two Clifford operators and a local Clifford operator on a specific binary GPM set. That is,  every such operator admits a classical representation as a product of  at most two $2\times 2$ symplectic matrices and a specific $2\times 2$ matrix.

We now present an example demonstrating how to derive the classical representation using the schematic diagram shown in Figure \ref{fig4.1}.

\begin{example}
For $d=3\times 5$, let $\mathcal{M}=\{ X^6Z^6, Z^5, X^5Z^5  \}$, then $P_e(\mathcal{M})=(3, 5, 5)$. Applying the Clifford operator
$$C_{\gcd(6,6,15)}=\left[ \begin{array}{llll} 2 &-5\\ 1 &-2 \end{array} \right]\left[ \begin{array}{llll} 1 &-1\\ 1 &\ \ 0 \end{array} \right]=\left[ \begin{array}{llll} -3 &-2\\ -1 &-1 \end{array} \right]$$
to $\mathcal{M}$, we get $\mathcal{M}'=\{ Z^3, X^5Z^{10}, X^5Z^5  \}$. By using the division algorithm, we can obtain $\mathcal{M}_1=\{ Z^3, X^5Z, X^5Z^2  \}$ with $P_e(\mathcal{M}_1)=(1, 1, 3)$. Applying $C_{\gcd(5,2)}=\left[ \begin{array}{llll} 2 &-5\\ 1 &-2 \end{array} \right]$ to $\mathcal{M}_1$, we get a GPM set
$$\mathcal{M}'_1=\{ Z, X^5Z^3, Z^9  \}.$$
Evidently, $\mathcal{M}'_1$ satisfies Condition (I) and $a=5, b=1$. Hence the local Clifford operators on $\mathcal{M}$ have the form $L_{(5,1)}\circ C_{\gcd(5,2)}\circ C_{\gcd(6,6,15)}$ with $L_{(5, 1)}=\left[ \begin{array}{llll} u_1 &u_2\\ v_1 &v_2 \end{array} \right]$, where $\gcd(u_{2}, v_{2}, 15)=1$, $u_1v_2 -u_2v_1 \equiv 1\pmod{3}$ and $0 \leq u_1, v_1 < 3$, $0\leq u_2, v_2 < 15$.

\end{example}

\section{LU-equivalence of GBS sets}
In this section, we present an application of local Clifford operators: determining whether any two given GBS sets are LU-equivalent or whether their corresponding GPM sets are U-equivalent (defined below). An example is provided to illustrate how to determine whether two given GPM sets are U-equivalent.

Two GPM sets $\mathcal{M}$ and $\mathcal{N}$ are said to be {\it U-equivalent} if there exist unitary operators $U$ and $V$ such that 
$$\mathcal{M}\approx U\mathcal{N} V.$$ 
Since every GPM set is U-equivalent to a GPM set containing the identity matrix (by left-multiplying with the conjugate transpose of one of its elements), in this section, we restrict our consideration to sets that include the identity matrix. Such a GPM set is referred to as a {\it standard} GPM set.

The following lemma from \cite{wang2025epj} provides a useful characterization of U-equivalence between two $n$-GPM sets.
\begin{lemma}[\cite{wang2025epj}]\label{lem5.1}
For a given n-GPM set $\mathcal{M} =\{ M_1, M_2, \dots , M_n \}$, the standard GPM set $\mathcal{N}$ that is U-equivalent to the set $\mathcal{M}$ has the form
$$\mathcal{N} \approx \{ R(M_{i_0}^{\dag } M_i)R^{\dag } \}_{i=1}^n,$$
where $R$ is a unitary operator.
\end{lemma}

See Figure \ref{fig5.1} for the visualization of Lemma \ref{lem5.1}.

\begin{figure}[htbp]
\centering
\begin{tikzpicture}[
    node distance=0.8cm,
    op/.style={draw, rectangle, minimum width=1.2cm, minimum height=0.8cm, fill=blue!10},
    set/.style={ellipse, draw, minimum width=2cm, minimum height=1.2cm, fill=green!20},
    arrow/.style={-Stealth, thick}
]

\node[set] (M) {$\mathcal{M} = \{M_1, M_2, \dots, M_n\}$};

\node[set, below=of M] (MdaggerM) {$M_{i_0}^{\dagger}\mathcal{M}$};
\draw[arrow] (M) -- node[right] {left-multiplied by $M_{i_0}^{\dagger}$} (MdaggerM);

\node[set, below=of MdaggerM] (N) {$\mathcal{N} \approx \{R(M_{i_0}^{\dagger}M_i)R^{\dagger}\}$};
\draw[arrow] (MdaggerM) -- node[right] {unitary conjugation by $R$} (N);

\end{tikzpicture}
\caption{Schematic diagram for determining a standard GPM sets that is U-equivalent to a given GPM set $\mathcal{M}$}
\label{fig5.1}
\end{figure}

Note that, for each $i_0\in \{1,\dots , n\}$, the unitary operator $R$ in Lemma \ref{lem5.1} is a local Clifford operator acting on the GPM set $M_{i_0}^{\dag }\mathcal{M}\triangleq \{ M_{i_0}^{\dag } M_i \}_{i=1}^n$. In order to find all standard GPM sets that are U-equivalent to a given GPM set $\mathcal{M}$, according to Lemma \ref{lem5.1}, it is sufficient to find all standard GPM sets that are UC-equivalent to one of the standard GPM sets $M_{i_0}^\dagger\mathcal{M}$. We denote all such GPM sets as $\mathcal{U}(\mathcal{M}$), which is the U-equivalence class of $\mathcal{M}$. As an extension of the Clifford-operator-based equivalence class in \cite{wang2025epj}, we provide the following procedure to find the U-equivalence class $\mathcal{U}(\mathcal{M})$.

\begin{procedure}[U-equivalent class $\mathcal{U}(\mathcal{M}$)]\label{proc5.1}
Let $\mathcal{M} =\{ (s_1,t_1), (s_2,t_2), \dots , (s_n,t_n) \}$ be an arbitrary $n$-GPM set.
\begin{enumerate}
\item[{\rm(1)}] Let $\mathcal{M}_{1}=\{(0,0),(s_2-s_1,t_2-t_1), \dots , (s_n-s_1,t_n-t_1) \}\approx(s_1,t_1)^{\dag}\mathcal{M}$. Then $\mathcal{M}_{1}$ is a standard GPM set.
\item[{\rm(2)}] Apply every local Clifford operator to $\mathcal{M}_{1}$ (such operators are determined by the classical representation (\ref{classical4.1})) and obtain all GPM sets that are UC-equivalent to $\mathcal{M}_{1}$. Denote the collection of all such standard  $n$-GPM sets as $\mathcal{UC}(\mathcal{M}_{1})$. Similarly, we can derive $\mathcal{UC}(\mathcal{M}_{i})$ for every $i\in \{1,\dots , n\}$.
\item[{\rm(3)}] The set $\mathcal{U(M)}\triangleq\cup_{i=1}^{n}\mathcal{UC}(\mathcal{M}_{i})$ is just the set of all standard $n$-GPM sets that are U-equivalent to the given set $\mathcal{M}$.
\end{enumerate}
\end{procedure}

The schematic diagram of Procedure \ref{proc5.1} is illustrated in Figure \ref{fig5.2}.

\begin{figure}[htbp]
\centering
\begin{tikzpicture}[
    node distance=0.8cm,
    op/.style={draw, rectangle, minimum width=1.2cm, minimum height=0.8cm, fill=blue!10},
    set/.style={ellipse, draw, minimum width=2cm, minimum height=1.2cm, fill=green!20},
    arrow/.style={-Stealth, thick}
]

\node[set] (M) {$\mathcal{M} = \{M_1, M_2, \dots, M_n\}$};

\node[set, below=of M] (MdaggerM) {$\mathcal{M}_i\approx M_{i}^{\dagger}\mathcal{M}$};
\draw[arrow] (M) -- node[right] {left-multiplied by $M_{i}^{\dagger}$} (MdaggerM);

\node[set, below=of MdaggerM] (class) {$\mathcal{UC}(\mathcal{M}_{i})$};
\draw[arrow] (MdaggerM) -- node[right] {apply local Clifford operators to $\mathcal{M}_{i}$} (class);

\node[set, below=of class] (Uclass) {$\mathcal{U(M)}\triangleq\cup_{i=1}^{n}\mathcal{UC}(\mathcal{M}_{i})$};
\draw[arrow] (class) -- node[right] {union} (Uclass);

\end{tikzpicture}
\caption{Schematic diagram for determining the U-equivalence class of a given GPM set $\mathcal{M}$}
\label{fig5.2}
\end{figure}

Based on Procedure \ref{proc5.1}, we propose a method to determine the U-equivalence of two $n$-GPM sets $\mathcal{M}$ and $\mathcal{N}$. Without loss of generality, we assume that $\mathcal{N}$ is a standard $n$-GPM set.

\begin{procedure}[U-equivalence]\label{proc5.2}
Let $\mathcal{M}= \{X^{s_1}Z^{t_1},\dots, X^{s_n}Z^{t_n} \}$ be an arbitrary $n$-GPM set, and $\mathcal{N}$ be a standard $n$-GPM set.
\begin{enumerate}
\item[{\rm(1)}] Apply Procedure \ref{proc5.1} to $\mathcal{M}$ and get all standard $n$-GPM sets that are U-equivalent to $\mathcal{M}$, that is, $\mathcal{U(M)}$.
\item[{\rm(2)}] Check whether the GPM set $\mathcal{N}$ is in $\mathcal{U(M)}$.
If yes, $\mathcal{M}$ and $\mathcal{N}$ are U-equivalent. 
If not, they are not U-equivalent.
\end{enumerate}
\end{procedure}

Using Matlab to implement  the Clifford-operator-based classification method, the authors of \cite{wang2025epj} obtained representative elements of 31 (Clifford-operator-based) equivalent  classes of 4-GBS sets  in $\mathbb{C}^6\otimes \mathbb{C}^6$. These elements are presented in Table \ref{tab5.1}, sorted in lexicographic order. Using Procedure \ref{proc5.2}, we show that the 31 representative 4-GBS sets are pairwise LU-inequivalent, thus forming 31 distinct LU-equivalence classes.

\begin{table}[htbp]\label{tab5.1}
\centering
\caption{Representative elements of 31 equivalent  classes of 4-GBS sets  in $\mathbb{C}^6\otimes \mathbb{C}^6$}
\label{tab5.1}
\footnotesize
  \renewcommand{\arraystretch}{1.2}
\begin{tabular}{cccccc}
\toprule[1.5pt]  

\textbf{Set}&\textbf{Elements}&\textbf{Set}&\textbf{Elements}&\textbf{Set}&\textbf{Elements}\\
\hline
$\mathcal{S}_{1}$&$I,Z,Z^{2},Z^{3}$&$\mathcal{S}_{2}$&$I,Z,Z^{2},Z^{4}$&$\mathcal{S}_{3}$&$I,Z,Z^{2},X$\\

$\mathcal{S}_{4}$&$I,Z,Z^{2},X^{2}$&$\mathcal{S}_{5}$&$I,Z,Z^{2},X^{2}Z$&$\mathcal{S}_{6}$&$I,Z,Z^{2},X^{3}$\\

$\mathcal{S}_{7}$&$I,Z,Z^{2},X^{3}Z$&$\mathcal{S}_{8}$&$I,Z,Z^{3},Z^{4}$&$\mathcal{S}_{9}$&$I,Z,Z^{3},X$\\

$\mathcal{S}_{10}$&$I,Z,Z^{3},X^{2}$&$\mathcal{S}_{11}$&$I,Z,Z^{3},X^{2}Z$&$\mathcal{S}_{12}$&$I,Z,Z^{3},X^{3}$\\

$\mathcal{S}_{13}$&$I,Z,Z^{3},X^{3}Z$&$\mathcal{S}_{14}$&$I,Z,Z^{3},X^{5}$&$\mathcal{S}_{15}$&$I,Z,X,XZ$\\

$\mathcal{S}_{16}$&$I,Z,X,XZ^{2}$&$\mathcal{S}_{17}$&$I,Z,X,XZ^{3}$&$\mathcal{S}_{18}$&$I,Z,X,X^{2}Z^{2}$\\

$\mathcal{S}_{19}$&$I,Z,X,X^{3}Z^{5}$&$\mathcal{S}_{20}$&$I,Z,X,X^{4}Z^{4}$&$\mathcal{S}_{21}$&$I,Z,X^{2},X^{2}Z$\\

$\mathcal{S}_{22}$&$I,Z,X^{2},X^{2}Z^{2}$&$\mathcal{S}_{23}$&$I,Z,X^{2},X^{2}Z^{5}$&$\mathcal{S}_{24}$&$I,Z,X^{2},X^{3}Z$\\

$\mathcal{S}_{25}$&$I,Z,X^{2},X^{4}$&$\mathcal{S}_{26}$&$I,Z,X^{3},X^{3}Z$&$\mathcal{S}_{27}$&$I,Z,X^{3},X^{3}Z^{3}$\\

$\mathcal{S}_{28}$&$I,Z,X^{3},X^{3}Z^{5}$&$\mathcal{S}_{29}$&$I,Z^{2},Z^{4},X^{2}$&$\mathcal{S}_{30}$&$I,Z^{2},X^{2},X^{2}Z^{2}$\\

$\mathcal{S}_{31}$&$I,Z^{3},X^{3},X^{3}Z^{3}$&$$&$$&$$&$$\\
\bottomrule[1.5pt]
\end{tabular}
\end{table}

\begin{example}\label{ex5.1}
The 31 GPM sets in Table \ref{tab5.1} is pairwise U-inequivalent. We derive this conclusion step by step using Procedure \ref{proc5.2}: (1) To determine the U-equivalence class of the first set $\mathcal{S}_{1}$ in Table \ref{tab5.1}, we apply Procedure \ref{proc5.1}\textemdash that is, we compute $\mathcal{U}(\mathcal{S}_{1})$. Note that $\mathcal{S}_{1}$ corresponds to the simple case described in Remark 1, where all local Clifford operators acting on it can be implemented through Clifford operators. Therefore, $\mathcal{U}(\mathcal{S}_{1})$ can be easily obtained using Procedure \ref{proc5.1} (see Table \ref{tab5.2}, we have omitted the common element $(0,0)$).  (2) By comparing Table \ref{tab5.2} with the other 30 sets in Table \ref{tab5.1}, we observe that none of them appear in Table \ref{tab5.2}. This conclusively demonstrates that $\mathcal{S}_{1}$ is U-inequivalent to all other 30 sets.
(3) For the remaining 30 sets in Table \ref{tab5.1}, by reapplying Procedure \ref{proc5.2} (Steps (1)--(2)), we conclusively demonstrate that all 31 GPM sets are mutually U-inequivalent.
\begin{table}[htbp]\label{tab5.2}
  \centering
  \caption{$\mathcal{U}(\mathcal{S}_{1})\triangleq\mathcal{UC}(\mathcal{S}_{1})\cup\mathcal{UC}((0,1)^{\dag}\mathcal{S}_{1})$ 
  (48 items)}
  \label{tab5.2}
  \footnotesize
  \renewcommand{\arraystretch}{1.25}
  \setlength{\tabcolsep}{4pt}
  \begin{tabular}{@{}>{\bfseries}l l@{}}
    \toprule[1.5pt]
    \multicolumn{1}{c}{\textbf{Elements of  class $\mathcal{UC}(\mathcal{S}_1)$ (24 items)}} \\
    \begin{tabular}[t]{lll}
      $\{(0,1),(0,2),(0,3)\}$&$\{(0,2),(3,0),(3,4)\}$&$\{(0,2),(3,1),(3,3)\}$\\
      $\{(0,3),(0,4),(0,5)\}$&$\{(0,3),(2,0),(4,3)\}$&$\{(0,3),(2,2),(4,1)\}$\\
      $\{(0,3),(2,1),(4,2)\}$&$\{(0,3),(2,3),(4,0)\}$&$\{(0,3),(2,4),(4,5)\}$\\
      $\{(0,3),(2,5),(4,4)\}$&$\{(0,4),(3,0),(3,2)\}$&$\{(0,4),(3,3),(3,5)\}$\\
      $\{(1,0),(2,0),(3,0)\}$&$\{(1,1),(2,2),(3,3)\}$&$\{(1,2),(2,4),(3,0)\}$\\
      $\{(1,3),(2,0),(3,3)\}$&$\{(1,4),(2,2),(3,0)\}$&$\{(1,5),(2,4),(3,3)\}$\\
      $\{(3,0),(4,0),(5,0)\}$&$\{(3,0),(4,2),(5,4)\}$&$\{(3,0),(4,4),(5,2)\}$\\
      $\{(3,3),(4,0),(5,3)\}$&$\{(3,3),(4,2),(5,1)\}$&$\{(3,3),(4,4),(5,5)\}$
    \end{tabular} \\
  \hline
\multicolumn{1}{c}{\textbf{Elements of  class $\mathcal{UC}((0,1)^{\dag}\mathcal{S}_{1})$ (24 items)}} \\
    \midrule
    \begin{tabular}[t]{lll}
     $\{(0,1),(0,2),(0,5)\}$&$\{(0,1),(0,4),(0,5)\}$&$\{(0,2),(3,1),(3,5)\}$\\
     $\{(0,2),(3,2),(3,4)\}$&$\{(0,4),(3,1),(3,5)\}$&$\{(0,4),(3,2),(3,4)\}$\\
     $\{(1,0),(2,0),(5,0)\}$&$\{(1,0),(4,0),(5,0)\}$&$\{(1,1),(2,2),(5,5)\}$\\
     $\{(1,1),(4,4),(5,5)\}$&$\{(1,2),(2,4),(5,4)\}$&$\{(1,2),(4,2),(5,4)\}$\\
     $\{(1,3),(2,0),(5,3)\}$&$\{(1,5),(2,4),(5,1)\}$&$\{(1,5),(4,2),(5,1)\}$\\
     $\{(1,4),(4,4),(5,2)\}$&$\{(1,4),(2,2),(3,0)\}$&$\{(1,5),(2,4),(3,3)\}$\\
     $\{(2,0),(2,3),(4,3)\}$&$\{(2,1),(2,4),(4,5)\}$&$\{(2,1),(4,2),(4,5)\}$\\
     $\{(2,2),(2,5),(4,1)\}$&$\{(2,3),(4,0),(4,3)\}$&$\{(2,5),(4,1),(4,4)\}$
    \end{tabular}\\
   \hline
\multicolumn{1}{c}{\textbf{$\mathcal{UC}((0,2)^{\dag}\mathcal{S}_{1})=\mathcal{UC}((0,1)^{\dag}\mathcal{S}_{1})$}} \\
\hline
\multicolumn{1}{c}{\textbf{$\mathcal{UC}((0,3)^{\dag}\mathcal{S}_{1})=\mathcal{UC}(\mathcal{S}_{1})$}} \\
 \bottomrule[1.5pt]
\end{tabular}
\end{table}
\end{example}

Example \ref{ex5.1} shows that for 4-GPM sets on $\mathbb{C}^6$, U-equivalent classes coincide with Clifford-operator-based equivalent classes; that is, in this case, local Clifford operators provide no advantage over Clifford operators. Since there exist local Clifford operators that are not Clifford operators (such as the local Clifford operator $W$ in Lemma \ref{lem3.1}), the U-equivalent classes should be larger than the Clifford-operator-based equivalent classes. Thus, Example \ref{ex5.1} appears to be a special case. Through computation, we have found an example where the two classes share the same representative, yet the U-equivalent class is strictly larger than the Clifford-operator-based equivalent class.

\begin{example}
For the quantum system $\mathbb{C}^{3^4}$, let $\mathcal{M}=\{I, X^3, Z^{3} \}$. Then, by Procedure \ref{proc5.1},
\begin{eqnarray*}
\mathcal{M}_{1}&=& \mathcal{M},\\
\mathcal{M}_{2}&=&\{ I, X^{78}, X^{78}Z^{3} \}\approx (X^3)^\dag \mathcal{M},\\
\mathcal{M}_{3}&=&\{I, Z^{78}, X^3Z^{78} \}\approx (Z^{3})^\dag \mathcal{M}.
\end{eqnarray*}
According to Theorem \ref{th3.1}, the GPMs $X^{3}$ and $Z^{3}$ (in $\mathcal{M}$) are UC-equivalent to
$X^{78}Z^{3}$ and $X^{78}$ (in $\mathcal{M}_{2}$) respectively via the local Clifford operator with the classical representation
$\left[ \begin{array}{llll} 26 &26\\ 1 &0 \end{array} \right]$, where $26\times0-26\times1\equiv 1\pmod{3^{2}}$.
Similarly, the GPMs $X^{3}$ and $Z^{3}$ are UC-equivalent to
$Z^{78}$ and $X^{3}Z^{78}$ (in $\mathcal{M}_{3}$) respectively via the local Clifford operator with the classical representation
$\left[ \begin{array}{llll} 0 &1\\ 26 &26 \end{array} \right]$, where $0 \times 26-1\times 26\equiv 1\pmod{3^{2}}$.
Then the three UC-equivalent classes, $\mathcal{UC}(\mathcal{M})$, $\mathcal{UC}(\mathcal{M}_{2})$ and $\mathcal{UC}(\mathcal{M}_{3})$, are the same, and the U-equivalent class $\mathcal{U}(\mathcal{M})\triangleq\cup_{i=1}^{3}\mathcal{UC}(\mathcal{M}_{i})=\mathcal{UC}(\mathcal{M})$.
Let $\mathcal{N}=\{ I, X^{78}Z^{54}, X^{78}Z^{78} \}$, then the two GPMs $X^{3}$ and $Z^{3}$ are UC-equivalent to
$X^{78}Z^{54}$ and $X^{78}Z^{78}$ (in $\mathcal{N}$) respectively via the local Clifford operator with the classical representation 
$\left[ \begin{array}{llll} 26 &26\\ 18 &26 \end{array} \right]$, where $26\times 26-26\times 18=208\equiv 1\pmod{3^{2}}$ and $26\times 26-26\times 18=208\equiv 46\pmod{3^{4}}$.
So, by Theorem \ref{th3.1}, the local Clifford operator is not a Clifford operator. Therefore, the GPM set $\mathcal{N}$ belongs to  the U-equivalent class $\mathcal{U}(\mathcal{M})$ but does not belong to the Clifford-operator-based equivalent class $\mathcal{CU}(\mathcal{M})$. That is, the U-equivalent class $\mathcal{U}(\mathcal{M})$ is strictly contains the Clifford-operator-based equivalent class $\mathcal{CU}(\mathcal{M})$.
\end{example}

Using Matlab, it is known that the U-equivalent class $\mathcal{U}(\mathcal{M})$ contains $52,488$ standard GPM sets and the Clifford-operator-based equivalent class $\mathcal{CU}(\mathcal{M})$ contains $17,496$ standard GPM sets. That is, the U-equivalent class $\mathcal{U}(\mathcal{M})$ is much larger than the Clifford-operator-based equivalent class $\mathcal{CU}(\mathcal{M})$.

In this section, we present two procedures to show how to find the U-equivalence class of a given GPM set and then to determine whether any two given GPM sets are U-equivalent.  Generally, the U-equivalence class is larger than the Clifford-operator-based equivalence class. Two examples are provided to show that sometimes these two equivalence classes are identical, while in other cases, the U-equivalence class is strictly larger than the Clifford-operator-based equivalence class. The conditions under which these two equivalence classes coincide\textemdash that is, when local Clifford operators can be realized via Clifford operators\textemdash remain an open problem. This question is meaningful: if we know the precise conditions under which the equivalence classes coincide, we could fully achieve U-equivalence classification solely through the Clifford-operator-based classification. Such a breakthrough would advance solutions to quantum nonlocality problems, including the local discrimination of quantum states.

\section{Conclusion}
In this paper, we have extended the study of Clifford operators\textemdash which play a vital role in quantum information processing\textemdash to local Clifford operators, a class of unitary operators that map $n$-GPM sets to $n$-GPM sets. We have proven that local Clifford operators acting on nontrivial $2$-GPM sets admit a classical representation analogous to that of standard Clifford operators, providing a computationally convenient description via $2\times2$ matrices. Furthermore, we have demonstrated that any local Clifford operator acting on an $n$-GPM set can be decomposed into a product of standard Clifford operators and a local Clifford operator acting on a pair of GPMs. This decomposition offers a complete classical characterization of unitary conjugation mappings between $n$-GPM sets.

The classical representation of local Clifford operators is expected to find broad and significant applications in quantum information processing, much like the classical representation of standard Clifford operators. To this end, we have presented two procedures illustrating how local Clifford operators can be used to determine the U-equivalence class of a given GPM set\textemdash a more refined classification than the Clifford-operator-based equivalence classes previously described in \cite{wang2025epj}\textemdash and to assess whether two given GPM sets are U-equivalent.

A key open question remains: under what conditions do these two types of equivalence classes coincide? In other words, when can all local Clifford operators be realized via Clifford operators? Resolving this question would enable a complete U-equivalence classification solely through Clifford-operator-based methods, potentially advancing solutions to problems in quantum nonlocality, including the local discrimination of quantum states. It is expected that the classical representation of local Clifford operators introduced in this work would highlight studies on related challenges in quantum information processing.

\begin{acknowledgments}
This work is supported by NSFC (Grant No. 11971151, 12371132, 12075159, 12171044, 12171266, 12426650, 12426664); Wuxi University Research Start-up Fund for Introduced Talents; Fundamental Research Funds for the Central Universities and the specific research fund of the Innovation Platform for Academicians of Hainan Province.
\end{acknowledgments}

\section*{Appendix A: Proof of the necessity for the conditions in Lemma 2}

\begin{proof} Now we show that an arbitrary local Clifford operator preserves both the essential power of a GPM and the commutation relation between two GPMs.
For a binary GPM set $\{ X^{s_1}Z^{t_1}, X^{s_2}Z^{t_2} \}$, if there is a unitary matrix $U$ such that
\begin{eqnarray*}
UX^{s_1}Z^{t_1}U^\dag = e^{i\theta _1}X^{s'_1}Z^{t'_1},\ \ UX^{s_2}Z^{t_2}U^\dag = e^{i\theta _2}X^{s'_2}Z^{t'_2},
\end{eqnarray*}
then
\begin{align*}
U(X^{s_2}Z^{t_2}&X^{s_1}Z^{t_1}) U^\dag =(UX^{s_2}Z^{t_2}U^\dag)(UX^{s_1}Z^{t_1}U^\dag)\\
&=e^{i(\theta _1+\theta_2)} X^{s'_2}Z^{t'_2}X^{s'_1}Z^{t'_1}\\
&=e^{i(\theta _1+\theta_2)}\omega^{s'_1t'_2-t'_1s'_2}X^{s'_1}Z^{t'_1}X^{s'_2}Z^{t'_2}\\
&=\omega^{s'_1t'_2-t'_1s'_2}(UX^{s_1}Z^{t_1}U^\dag)(UX^{s_2}Z^{t_2}U^\dag)\\
&=\omega^{s'_1t'_2-t'_1s'_2}U(X^{s_1}Z^{t_1}X^{s_2}Z^{t_2})U^\dag.
\end{align*}
And $X^{s_2}Z^{t_2}X^{s_1}Z^{t_1}=\omega^{s_1t_2-t_1s_2} X^{s_1}Z^{t_1}X^{s_2}Z^{t_2}$.

Thus, we have
\begin{eqnarray}\label{indexrelation}
s'_1t'_2-t'_1s'_2\equiv s_1t_2-t_1s_2\ ({\rm mod} \ d).
\end{eqnarray}

Assume that the binary set we investigate has a form $M= \{ X^a, Z^b \}$, where $a|d, b|d$ and $0 < a, b < d$. If a unitary matrix
$U$ can realize the following transformation
\begin{eqnarray*}
UX^aU^\dag\approx X^sZ^t,\ \ UZ^bU^\dag \approx X^{s'}Z^{t'},
\end{eqnarray*}
by Lemma \ref{lem2.1}, $P_e(X^a)=P_e(X^sZ^t)$ and $P_e(Z^b)=P_e(X^{s'}Z^{t'})$, that is, $\gcd(s,t,d)=a$ and $\gcd(s',t',d)=b$. Hence the action of $U$ can
be written as follows.
\begin{eqnarray}\label{eq6}
UX^aU^\dag\approx X^{\mu a}Z^{\sigma a},\ \ UZ^bU^\dag \approx X^{\eta b}Z^{\nu b},
\end{eqnarray}
where
\begin{eqnarray}\label{eq7}
\gcd(\mu , \sigma , d/a)=1, \gcd(\eta , \nu ,d/b)=1,
\end{eqnarray}
\begin{eqnarray}\label{eq8}
(\mu\nu -\eta\sigma ) ab=ab \ ({\rm mod} \ d).
\end{eqnarray}
The last equality is due to Eq. \eqref{indexrelation}.
\end{proof}

\section*{Appendix B: Proof of Theorem \ref{th3.1}}

We need to prepare two lemmas to prove Theorem \ref{th3.1} in advance.
\begin{lemma}\label{lem2.1appb}
The $nm$-dimensional quantum system $\mathbb{C}^{nm}$ can be regarded as a bipartite system $\mathbb{C}^n \otimes \mathbb{C}^m$.
Then $X_{nm}^m=X_n\otimes I_m$, $Z_{nm}=Z_n\otimes \sum _i \omega _{mn}^i |i\rangle\langle i|$
and $Z_{nm}^n=I_n \otimes Z_m$, where $\omega _{mn}=e^{2\pi i/mn}$. The subscript denotes the dimensions of the corresponding quantum systems.
\end{lemma}
\begin{proof}
Let $|i+mj\rangle=|j\rangle\otimes |i\rangle (i\in\mathbb{Z}_{m}, j\in\mathbb{Z}_{n})$ be the computational basis of $\mathbb{C}^{nm}$.
Then
$X_{nm}^m|i+mj\rangle =|i+m(j+1)\rangle=(X_n\otimes I_m)|i+mj\rangle$, and
$Z_{nm}|i+mj\rangle =\omega _{mn}^{i+mj}|i+mj\rangle=Z_n|j\rangle\otimes \omega _{mn}^i|i\rangle
=(Z_n\otimes \sum _i \omega _{mn}^i |i\rangle\langle i|)|i+mj\rangle.$
\end{proof}

\begin{lemma}\label{lem2appb}
Let $a$ and $b$ are both positive factors of $d$.
If $a|b$, $\gcd(u,v,d/a)=1, uab \equiv ab \ ({\rm mod} \  d)$ and $\gcd(u, d/a, d/b)=1$,
then $X^{ua}Z^{va}$ and $Z^{b}$ are UC equivalent to $X^{a}$ and $Z^{b}$ respectively,
that is, $\{X^a, Z^b \} \sim \{X^{ua}Z^{va}, Z^b \}$.
\end{lemma}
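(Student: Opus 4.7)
The plan is to construct an explicit unitary $W$ on $\mathbb{C}^d$ realizing $W X^a W^\dagger \approx X^{ua}Z^{va}$ and $W Z^b W^\dagger = Z^b$. Motivated by Lemma~\ref{lem3.2}, I would search for $W$ in the permutation-with-phase form $W|k\rangle = \omega_d^{h(k)}|\pi(k)\rangle$, where $\pi$ is a permutation of $\mathbb{Z}_d$ and $h:\mathbb{Z}_d\to\mathbb{Z}_d$ is a phase function. The extra diagonal phase is what lets the conjugate produce the $Z^{va}$ tail that the pure permutation of Lemma~\ref{lem3.2} cannot generate on its own.

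Translating the two conjugation identities into conditions on $(\pi,h)$ yields three requirements: the fiber condition $\pi(k)\equiv k\pmod{d/b}$ (forced by $W Z^b W^\dagger=Z^b$, since $Z^b$ is diagonal with eigenvalues indexed by $k\bmod d/b$); the translation recurrence $\pi(k+a)\equiv\pi(k)+ua\pmod d$; and the phase recurrence $h(k+a)-h(k)\equiv va\,\pi(k)\pmod d$. Compatibility of the first two recurrences reduces to $(u-1)a\equiv 0\pmod{d/b}$, which is exactly the hypothesis $uab\equiv ab\pmod d$ divided through by $b$, an operation justified by the divisibility $a\mid b$ assumed in the lemma.

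The bulk of the construction is then to fix $\pi$ on the representatives $\{0,1,\ldots,a-1\}$ and extend by $\pi(i+ja)=\pi(i)+jua$. I expect the main obstacle to be choosing $\pi(0),\ldots,\pi(a-1)$ so that each lies in its prescribed fiber $\pi(i)\equiv i\pmod{d/b}$ while the full $\pi$ remains a bijection on $\mathbb{Z}_d$. This is a coset-counting problem mixing the moduli $d/a$ and $d/b$. The hypothesis $\gcd(u,d/a,d/b)=1$, which under $a\mid b$ simplifies to $\gcd(u,d/b)=1$, makes $u$ invertible modulo $d/b$ and produces the $a$ distinct cosets of $\langle ua\rangle\subset\mathbb{Z}_d$ needed to place the $\pi(i)$; the complementary condition $\gcd(u,v,d/a)=1$ is used to show that this placement is consistent with the fibers. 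Once $\pi$ exists, the phase recurrence defines $h$ by telescoping $va\,\pi(k)$ along the $X^a$-orbit, and its well-definedness modulo $d$ as $j$ wraps around $\mathbb{Z}_{d/a}$ follows from another application of the same congruences.

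For the actual exposition I would organize the verification around the tensor decomposition $\mathbb{C}^d\cong\mathbb{C}^b\otimes\mathbb{C}^{d/b}$ supplied by Lemma~\ref{lem1appb}, in which $Z^b=I_b\otimes Z_{d/b}$ acts only on the second factor. Relative to this splitting, $\pi$ reduces to a Lemma~\ref{lem3.2}-style permutation on the $\mathbb{Z}_{d/b}$ label---legal precisely because $\gcd(u,d/b)=1$---while $h$ encodes a diagonal correction that couples the two tensor factors and accounts for the $Z^{va}$ piece. The final check that $W X^a W^\dagger \approx X^{ua}Z^{va}$ and $W Z^b W^\dagger = Z^b$ is then a direct computation, with any residual combinatorial phase factors absorbed into the global-phase ambiguity in the first identity.
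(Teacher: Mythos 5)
Your ansatz breaks down at the permutation stage, and the failure is forced by the hypotheses rather than being a fixable technicality. If $W$ is monomial, $W|k\rangle=\omega_d^{h(k)}|\pi(k)\rangle$, then $WX^aW^\dagger$ is supported exactly on the matrix entries $(\pi(k+a),\pi(k))$, so equality with $X^{ua}Z^{va}$ up to global phase forces $\pi(k+a)\equiv\pi(k)+ua\pmod d$ for every $k$, as you note. But then each $+a$-orbit of $\mathbb{Z}_d$ (of length $d/a$) must map bijectively onto a coset of $\langle ua\rangle$, whose size is $d/(a\gcd(u,d/a))$; bijectivity of $\pi$ therefore requires $\gcd(u,d/a)=1$. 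The hypotheses do not give you this. Under $a\mid b$ you correctly reduce $\gcd(u,d/a,d/b)=1$ to $\gcd(u,d/b)=1$, but that makes $u$ invertible modulo $d/b$, not modulo $d/a$, and $\gcd(u,v,d/a)=1$ permits $v$ to absorb any common factor of $u$ and $d/a$. Concretely, take $d=12$, $a=1$, $b=4$, $u=4$, $v=1$: all four hypotheses hold ($16\equiv 4\pmod{12}$, $\gcd(4,1,12)=1$, $\gcd(4,12,3)=1$), yet $\pi(k+1)\equiv\pi(k)+4\pmod{12}$ admits no bijective solution, so no monomial $W$ conjugates $X$ into $X^{4}Z$ on $\mathbb{C}^{12}$. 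Your coset-counting paragraph is exactly where this is hidden: $\langle ua\rangle$ has $a\gcd(u,d/a)$ cosets in $\mathbb{Z}_d$, not $a$, and the condition you invoke controls the wrong modulus.

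The paper's proof avoids this by never asking a single monomial operator to do the whole job: it first applies the Fourier-type Clifford sending $\{X^a,Z^b\}$ to $\{Z^{-a},X^b\}$ together with a reduction Clifford $\overline{C}$, and only then invokes the permutation construction of Lemma~\ref{lem3.2} with the roles of $a$ and $b$ interchanged, where the coprimality actually needed is $\gcd(u',d/b)=1$ --- and that one genuinely follows from $a\mid b$ and $\gcd(u,d/a,d/b)=1$. A final shear Clifford restores the $Z$-tail. To rescue your direct construction you would have to either establish $\gcd(u,d/a)=1$ (false in general under these hypotheses) or precompose $W$ with a non-monomial Clifford, which is essentially the paper's route.
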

\begin{proof}
It is easy to see that $X^{a}$ and $Z^{b}$ are UC equivalent to $Z^{-a}$ and $X^{b}$ respectively
via the Clifford operator with its symplectic representation
$\left[ \begin{array}{ccc} 0 &1\\ -1 &0 \end{array} \right]$, that is,
\begin{eqnarray}\label{eq1lem2ap2}
\{X^a, Z^b \}\mathop{\sim} \{Z^{-a},  X^b \}.
\end{eqnarray}
Since $\gcd(d/a-u,d/a-v,d/a)=\gcd(u,v,d/a)=1$, by Lemma \ref{lem2.1},
the GPMs $X^{(d/a-u)a}Z^{(d/a-v)a}$ and $Z^b$ are UC equivalent to $Z^{a}$ and $X^{u^{\prime}b}Z^{v^{\prime}b}$ respectively
via the Clifford operator $C_{\gcd(d/a-u, d/a-v, d/a)}$,
where $u^{\prime}\equiv d/a-u$ (mod $d/b$).
That is,
\begin{eqnarray}\label{eq2lem2ap2}
\{X^{(d/a-u)a}Z^{(d/a-v)a}, Z^b \}\mathop{\sim}\{Z^a,  X^{u^{\prime}b}Z^{v^{\prime}b} \}.
\end{eqnarray}
Then $\gcd(u^{\prime}, d/a, d/b)=\gcd(d/a-u, d/a, d/b)=\gcd(u, d/a, d/b)=1$.
Furthermore, $\gcd(u^{\prime}, d/b)=\gcd(u^{\prime}, d/a, d/b)=1$ follows by the condition $a|b$.
According to  Lemma \ref{lem3.1}, the GPMs $X^{b}$ and $Z^a$ are UC equivalent to $X^{u^{\prime}b}$ and $Z^{a}$ respectively.
By the equation $\gcd(u^{\prime}, d/b)=1$, there exists an integer $c$ such that $cu^{\prime}\equiv 1$ (mod $d/b$).
Then the two GPMs $X^{u^{\prime}b}$ and $Z^{a}$ are UC equivalent to $X^{u^{\prime}b}Z^{v^{\prime}b}$ and $Z^{a}$ respectively
through a Clifford operator with symplectic representation
$V = \left[ \begin{array}{llll} 1 &0\\ cv^{\prime} &1 \end{array} \right].$
So the GPMs $X^{b}$ and $Z^a$ are UC equivalent to $X^{u^{\prime}b}Z^{v^{\prime}b}$ and $Z^{a}$ respectively,
that is,
\begin{eqnarray}\label{eq3lem2ap2}
\{Z^a, X^{b} \}\mathop{\sim}\{Z^a,  X^{u^{\prime}b}Z^{v^{\prime}b} \}.
\end{eqnarray}
From the transitivity of UC-equivalence and Eqs. \eqref{eq1lem2ap2}-\eqref{eq3lem2ap2}, we obtain the conclusion.
\end{proof}

Now we can show the proof of Theorem \ref{th3.1}.

\begin{proof}
First, we prove the necessity of the conditions in Theorem \ref{th3.1}. Suppose that two GPMs $X^{a}$ and $Z^{b}$ are UC equivalent to $X^{u_{1} a}Z^{v_{1} a}$ and $X^{u_{2} b}Z^{v_{2} b}$ respectively. Since UC-equivalent GPMs share the same essential powers and UC transformations preserve the commutation coefficient,
Eqs. \eqref{eq1th3.1} and \eqref{eq2th3.1} hold.
We now proceed to prove Eq. \eqref{eq3th3.1}.
Lemma \ref{lem2.1} and Eq. \eqref{eq1th3.1} ensure that the two GPMs $X^{u_{1} a}Z^{v_{1} a}$ and $X^{u_{2} b}Z^{v_{2} b}$
are UC equivalent to $X^{ua}Z^{va}$ and $Z^{b}$ respectively through a Clifford operator $C_{\gcd(u_{2}, v_{2}, d/b)}$. A straightforward calculation shows that
$u\equiv u_{1}v_{2}-u_{2}v_{1}$ (mod $d/b$).
Then $\gcd(u, v, d/a)=\gcd(u_{1},v_{1},d/a)=1$, $uab\equiv(u_{1}v_{2}-u_{2}v_{1})ab\equiv ab\ ({\rm mod}\ d)$
and $\gcd(u, d/a, d/b)=\gcd(u_{1}v_{2}-u_{2}v_{1}, d/a, d/b)$.
According to the transitivity of UC equivalence,
the GPMs $X^{ua}Z^{va}$ and $Z^{b}$ are UC equivalent to $X^{a}$ and $Z^{b}$ respectively
through a unitary operator $H$, that is,
\begin{eqnarray}\label{eq1proofth1}
X^{a}\mathop{\sim}\limits^{H}X^{ua}Z^{va},\ \ Z^{b} \mathop{\sim}\limits^{H} Z^{b}.
\end{eqnarray}
Let $c=\gcd(u, d/a)$ and $d=ace$, Eq. \eqref{eq1proofth1} leads to the following equation
\begin{eqnarray}\label{eq2proofth1}
HX^{ae}H^\dag=f(\omega) X^{uae}Z^{vae}=f(\omega)Z^{vae},
\end{eqnarray}
where $f(\omega)$ is some complex number with modulus 1.
Recall that the commutation coefficient of two GPMs, $X^{ae}$ and $Z^{b}$,  are invariant through UC-transformations,
we have $aeb\equiv 0\ ({\rm mod} \  d)$, and then $c$ is a factor of $b$, that is, $c|b$.
Let $b=b_1c$, then $b_1|(ae)$ follows by $b|d$ and $d=ace$.
Now we have
\begin{equation*}\label{eqaeb}
[ae,b]=\frac{aeb}{\gcd(ae,b)}=\frac{aec}{\gcd(ae/b_1,c)}=\frac{d}{\gcd(ae/b_1,c)},
\end{equation*}
where $[ae,b]$ refers to the least common multiple of $ae$ and $b$, so $[ae, b]$ is also a factor of $d$.
It can be proven that $[ae, b]=d$.
If it is not the case, we can assume that $[ae, b]=aed_1=bd_2 (<d)$, then, by Eqs. \eqref{eq1proofth1} and \eqref{eq2proofth1},
\begin{eqnarray*}
X^{[ae, b]}&=X^{aed_1}\mathop{\sim}\limits^{H} Z^{vaed_1}=Z^{v[ae, b]},\\
Z^{v[ae, b]}&=Z^{vbd_2}\mathop{\sim}\limits^{H} Z^{vbd_2}=Z^{v[ae, b]}.
\end{eqnarray*}
Thus, $X^{[ae, b]}\approx Z^{v[ae, b]}$. This is impossible.
Therefore $[ae, b]=d$ and $\gcd(ae/b_1,c)=1$, and  then $\gcd(u, d/a, d/b)=\gcd(c, ae/b_1)=1$ follows by $b=b_1c$ and $d=ace$.
Now we have $\gcd(u_{1}v_{2}-u_{2}v_{1}, d/a, d/b)=\gcd(u, d/a, d/b)=1$ and Eq. \eqref{eq3th3.1} follows.

Next, we prove the sufficiency of the conditions. Assume Eqs.\eqref{eq1th3.1}-\eqref{eq3th3.1} are true, Eq.\eqref{eq1th3.1} ensures that the two GPMs $X^{u_{2} b}Z^{v_{2} b}$ and $X^{u_{1} a}Z^{v_{1} a}$ are UC equivalent to $Z^{b}$ and $X^{ua}Z^{va}$ respectively through a Clifford operator $C_{\gcd(u_{2}b,v_{2}b,d)}$,
where $u\equiv u_{1}v_{2}-u_{2}v_{1}$ (mod $d/b$). Then $\gcd(u, v, d/a)=\gcd(u_{1},v_{1},d/a)=1$, $uab\equiv(u_{1}v_{2}-u_{2}v_{1})ab\equiv ab\ (mod\ d)$
and $\gcd(u, d/a, d/b)=\gcd(u_{1}v_{2}-u_{2}v_{1}, d/a, d/b)$. Therefore, in order to prove the sufficiency, we only need to prove that if $\gcd(u,v,d/a)=1$, $uab \equiv ab (mod \  d)$ and $\gcd(u, d/a, d/b)=1$, then $X^{a}$ and $Z^{b}$ are UC equivalent to $X^{ua}Z^{va}$ and $Z^{b}$ respectively.
Let $\gcd(a, d/b)=a_1$, then $\gcd(a/a_1, d/(ba_1))=1$, $d=a_1be$ and $a=a_1m$ where $e$ and $m$ are two positive integers.
And then $\gcd(m, e)=1$, $m|(be)$ and $m|b$. Note that $C^{d}= C^{be} \otimes C^{a_1}$, Lemma \ref{lem2.1appb} yields
$X^a=(X^{a_1})^m=X_{be}^m \otimes I_{a_1}, Z^b=Z_{be}^b \otimes \sum _i \omega _{d}^{bi} |i\rangle\langle i|,$
\begin{eqnarray*}
X^{ua}Z^{va}=X_{be}^{um}Z_{be}^{va}\otimes \sum _i \omega _{d}^{vai} |i\rangle\langle i|\\
=\sum _i \omega _{be}^{vmi}X_{be}^{um}Z_{be}^{va}\otimes |i\rangle\langle i|.
\end{eqnarray*}
Meanwhile, $umb\equiv mb\ (mod\ be)$ follows by $uab\equiv ab\ (mod\ d)$ and $a=a_1m$, and $\gcd(u, d/a, d/b)=1$ leads to $\gcd(u, d/a, a_1)=1$ and $\gcd(u, d/a, e)=1$.
So that $\gcd(u,va_1,be/m)=\gcd(u,v,d/a)\gcd(u, d/a, a_1)=1$ and $\gcd(u, be/m, be/b)=\gcd(u, d/a, e)=1$. According to $m|b$ and Lemma \ref{lem2appb}, we have
$$\{X_{be}^{m}, Z_{be}^b\}\mathop{\sim}\{X_{be}^{um}Z_{be}^{va_1m}, Z_{be}^b\}=\{X_{be}^{um}Z_{be}^{va}, Z_{be}^b\}.$$
Finally, we derive the conclusion
\begin{eqnarray*}
\{X^{a}, Z^b\}=\{X_{be}^m \otimes I_{a_1}, Z_{be}^b \otimes \sum _i \omega _{d}^{bi} |i\rangle\langle i|\}\\
\mathop{\sim}\limits^{\sum _i Z_{be}^{vi}\otimes |i\rangle\langle i|}
\{\sum _i \omega _{be}^{vmi}X_{be}^{m}\otimes |i\rangle\langle i|,Z_{be}^b \otimes \sum _i \omega _{d}^{bi} |i\rangle\langle i|\}\\
\sim \{\sum _i \omega _{be}^{vmi}X_{be}^{um}Z_{be}^{va}\otimes |i\rangle\langle i|,Z_{be}^b \otimes \sum _i \omega _{d}^{bi} |i\rangle\langle i|\}\\
=\{X^{ua}Z^{va}, Z^b\}.
\end{eqnarray*}
\end{proof}

\section*{Appendix C: Lemma \ref{lemappd}}

\begin{lemma}\label{lemappd}
Let $\mathcal{M}'= \{Z^b, X^{s_2}Z^{t_2}\cdots , X^{s_n}Z^{t_n} \}$ be a $n$-GPM set satisfying $b|d$ and $b|t_i (i=2, \dots , n)$, then $U$ is a local Clifford operator acting on $\mathcal{M}'$ if and only if $U$ is a local Clifford operator acting on $\{X^a, Z^b \}$ ($a=\gcd(s_2, \cdots , s_n, d)$).
\end{lemma}

\begin{proof}
Since $b|t_i$, it is obvious that $U$ is a local Clifford operator acting on $\mathcal{M}'$ if and only if $U$ is a local Clifford operator acting on the $n$-GPM set $\{Z^b, X^{s_2}, \dots, X^{s_n}\}$. If $U$ is a local Clifford operator on two GPMs $X^a$ and $Z^b$, then it is clear that $U$ is a local Clifford operator on the $n$-GPM set $\{Z^b, X^{s_2}, \dots, X^{s_n}\}$.
Conversely, if $U$ is a local Clifford operator on the $n$-GPM set $\{Z^b, X^{s_2}, \dots, X^{s_n}\}$, noting that $a=\gcd(s_2, \cdots , s_n, d)$, there exist integers $k_2, \cdots , k_n$ such that $k_2s_2+ \cdots + k_ns_n\equiv a$ (mod $d$), then $UX^aU^\dag=UX^{k_2s_2+ \cdots + k_ns_n}U^\dag=(UX^{s_2}U^\dag)^{k_2}\cdots(UX^{s_n}U^\dag)^{k_n}$ is a GPM. So $U$ is a local Clifford operator on $X^a$ and $Z^b$.
\end{proof}

\section*{Appendix D: Proof of Equations (\ref{classical4.5.1})-(\ref{classical4.5.2})}
\begin{proof} Let $P_e(\mathcal{M})=(p, p, \cdots p, q, \dots , q)$ where the first $m$ terms are the prime number $p$ and the remaining $n-m$ terms are the prime number $q$. Applying the Clifford operator $C_{\gcd(s_1,t_1,pq)}$ to $\mathcal{M}$, we obtain
$$\mathcal{M}'=\{Z^p, X^{s'_2}Z^{t'_2}, \cdots , X^{s'_n}Z^{t'_n} \} .$$
If $t'_{m+1}=\cdots =t'_n=0$, then $\mathcal{M}'$ satisfies Condition (I) and the local Clifford operators on $\mathcal{M}$ have the form $L_{(a,p)}\circ C_{\gcd(s_1,t_1,pq)}$.
If at least one of $t'_{m+1},\cdots ,t'_n$ is non-zero, by applying the division algorithm, we get 
$$\mathcal{M}_1=\{Z^p, X^{s'_2}, \dots ,X^{s'_m}, X^{s'_{m+1}}Z^{r_{m+1}}, \dots , X^{s'_n}Z^{r_n} \},$$
where $0\leq r_{i}<p, m+1\leq i\leq n $.
Without loss of generality, assume $r_n\neq 0$, then $\gcd(s'_n, r_n, p)=\gcd(r_n, q)=1$ and $Pe(X^{s'_n}Z^{r_n})=1$. Hence, the local Clifford operators on $\mathcal{M}$ have the form
$$L_{(a,1)}\circ C_{\gcd(s'_n, r_n, pq)}\circ C_{\gcd(s_1,t_1,pq)}.$$
\end{proof}

\nocite{*}


\begin{thebibliography}{}
\bibitem{niel2010camb-book}
M. A. Nielsen and I. L. Chuang, {\it Quantum Computation and Quantum Information: 10th Anniversary Edition} (Cambridge
University Press, 2010)
\bibitem{kaye2007camb-book}
P. Kaye, R. Laflamme and M. Mosca, {\it An Introduction to Quantum Computing} (Cambridge University Press, 2007)
\bibitem{goog2023nature0}
S. Cao, B. Wu, F. Chen et al., Generation of genuine entanglement up to 51 superconducting qubits, Nature \textbf{619}, 738-742 (2023).
\bibitem{gotts1997phd}
D. Gottesman, {\it Stabilizer Codes and Quantum Error Correction}, California Inst. Technol., Pasadena, CA, USA, 1997.
\bibitem{gotts1998pra}
D. Gottesman, A theory of fault-tolerant quantum computation, Phys. Rev. A \textbf{57}, 127-137 (1998).
\bibitem{goog2023nature}
Google Quantum AI, Suppressing quantum errors by scaling a surface code logical qubit, Nature \textbf{614}, 676-681 (2023).
\bibitem{aaro2004pra}
S. Aaronson and D. Gottesman, Improved simulation of stabilizer circuits, Phys. Rev. A \textbf{70} (5), 052328 (2004).
\bibitem{berg2021iee}
E. van den Berg, A simple method for sampling random Clifford operators, 2021 IEEE International Conference on Quantum Computing and
Engineering (QCE), 54-59 (2021).
\bibitem{bravyi2005magic}
S. Bravyi, A. Kitaev,  Universal quantum computation with ideal Clifford gates and noisy ancillas, Phys. Rev. A \textbf{71}, 022316 (2005).
\bibitem{seli2015lmcs}
P. Selinger, Generators and relations for $n$-qubit Clifford operators, Log. Methods Comput. Sci. \textbf{11}, 1-17 (2015).
\bibitem{gras2003ifc}
M. Grassl, M. Rotteler and T. Beth, Efficient quantum circuits for non-qubit quantum error-correcting codes, Int. J. Found. Comput. Sci. \textbf{14} (5), 757-775 (2003).
\bibitem{deha2003pra}
J. Dehaene, M. Van den Nest, B. De Moor and F. Verstraete, Local permutations of products of Bell states and entanglement distillation, Phys. Rev. A \textbf{67}, 022310 (2003).
\bibitem{zheng2025quantum}
Y. Zheng and D. E. Liu, From magic state distillation to dynamical systems, Quantum \textbf{9}, 1858 (2025).
\bibitem{appl2005jmp}
D. M. Appleby, Symmetric informationally complete-positive operator valued measures and
the extended clifford group, J. Math. Phys. \textbf{46}, 052107 (2005).
\bibitem{fari2014jpa}
J. M. Farinholt, An ideal characterization of the Clifford operators,
J. Phys. A: Math. Theor.  \textbf{47}, 305303 (2014).
\bibitem{hos2005pra}
E. Hostens, J. Dehaene and B. De Moor, Stabilizer states and Clifford operations for systems of arbitrary dimensions and modular arithmetic, Phys. Rev. A \textbf{71}, 042315 (2005).
\bibitem{wu-tian2018pra}
B. J. Wu, J. Q. Jiang, J. L. Zhang, G. J. Tian and X. M. Sun, Local unitary classification for sets of generalized Bell states, Phys. Rev. A \textbf{98}, 022304 (2018).
\bibitem{wang2021jmp}
C. H. Wang, J. T. Yuan, Y. H. Yang and G. F. Mu, Local unitary classification of generalized Bell state sets in $\mathbb{C}^{5}\otimes\mathbb{C}^{5}$, J. Math. Phys. \textbf{62}, 032203 (2021).
\bibitem{wang2025epj}
C. H. Wang, J. T. Yuan, M. S. Li, Y. H. Yang and S. M. Fei, Local unitary classification of sets of generalized Bell states in $\mathbb{C}^d\otimes \mathbb{C}^d$, EPJ Quantum Technol. \textbf{12}, 87 (2025).
\bibitem{tian2016pra}
G. J. Tian, S. X. Yu, F. Gao, Q. Y. Wen and C. H. Oh,  Classification of locally distinguishable and indistinguishable sets of
maximally entangled states, Phys. Rev. A \textbf{94}, 052315 (2016).
\bibitem{sing2017pra}
T. Singal, R. Rahman, S. Ghosh, and G. Kar, Necessary condition for local distinguishability of maximally entangled states: Beyond orthogonality preservation, Phys. Rev. A \textbf{96}, 042314 (2017).
\bibitem{yuan2022quantum}
J. T. Yuan, Y. H. Yang and C. H. Wang, Finding out all locally indistinguishable sets of generalized Bell state,
Quantum \textbf{6}, 763 (2022).
\bibitem{zhou2024pra}
Q. Zhou, Y. Z. Zhen, X. Y. Xu, S. Zhao, W. L. Yang, S. M. Fei, L. Li, N. L. Liu and K. Chen, Local unitary equivalence of arbitrary-dimensional multipartite quantum states, Phys. Rev. A \textbf{109}, 022427 (2024).

\end{thebibliography}
\end{document}